\setlist[enumerate,1]{label=(\roman*)}
\title{Rational Bubbles Attached to Real Assets\thanks{This document is a draft of a chapter to be included in \emph{Handbook of Economic Bubbles and Manias} edited by Manuel Santos. We thank Ngoc-Sang Pham and Manuel Santos for comments.}}
\author{Tomohiro Hirano\thanks{Department of Economics, Royal Holloway, University of London, and Research Associate at the Center for Macroeconomics at the London School of Economics,  \href{mailto:tomohiro.hirano@rhul.ac.uk}{tomohiro.hirano@rhul.ac.uk}.} \and Alexis Akira Toda\thanks{Department of Economics, Emory University, \href{mailto:alexis.akira.toda@emory.edu}{alexis.akira.toda@emory.edu}.}}
\numberwithin{equation}{section}
\numberwithin{lem}{section}
\numberwithin{prop}{section}
\begin{document}

\maketitle

\begin{abstract}
A rational bubble is a situation in which the asset price exceeds its fundamental value defined by the present discounted value of dividends in a rational equilibrium model.  We discuss the recent development of the theory of rational bubbles attached to real assets, emphasizing the following three points.
\begin{enumerate*}
    \item There exist plausible economic models in which bubbles inevitably emerge in the sense that all equilibria are bubbly.
    \item Such models are necessarily nonstationary but their long-run behavior can be analyzed using the local stable manifold theorem.
    \item Bubbles attached to real assets can naturally and necessarily arise with economic development.
\end{enumerate*}
We illustrate these three points in various settings attesting that bubbles may necessarily emerge for aggregate stocks, land, and other assets.

\medskip

\noindent
\textbf{Keywords:} bubble, elasticity, nonstationarity, productivity, unbalanced growth.

\medskip

\noindent
\textbf{JEL codes:} D53, E44, G12, O16.
\end{abstract}

\section{Introduction}

An asset price bubble occurs when ``asset prices do not reflect fundamentals'' \citep{Stiglitz1990}, or, in other words, when the asset price ($P$) exceeds its fundamental value ($V$) defined as the present discounted value of dividends ($D$). Throughout financial history, notable bubbly episodes that readily come to mind include the Japanese real estate and stock market bubble of the late 1980s, the {U.S.} dot-com bubble of the late 1990s, and the {U.S.} housing bubble in the mid-2000s.\footnote{\citet[Appendix B]{Kindleberger2000} documents 38 bubbly episodes in the 1618--1998 period, an average of one episode every ten years.} Although history is replete with bubbly episodes, it is well known in macro-finance theory that it is notoriously difficult to generate asset price bubbles ($P>V$) in rational equilibrium models with real assets. By ``real assets'', we mean assets that pay positive dividends ($D>0$) such as stocks, land, and housing. In fact, in a seminal paper on asset price bubbles, \citet{SantosWoodford1997} proved the following bubble impossibility result: in a general equilibrium model with rational optimizing agents, if aggregate dividends comprise a non-negligible fraction of aggregate endowments, asset price bubbles cannot arise.\footnote{This result follows from Theorem 3.3 and Corollary 3.4 of \citet{SantosWoodford1997}. See \citet[\S3.4]{HiranoToda2024JME} for a simple illustration.} This bubble impossibility result implies that there is no basic macro-finance theory to think about bubbles attached to real assets in the first place, despite the fact that the term ``bubble'' frequently appears in the popular press or policy circles. Moreover, perhaps due to the result, in the academic world of macro-finance, it seems that there is a presupposition that asset prices must reflect fundamentals. Indeed, it is fair to say that these models are constructed so that the asset price is always equal to the fundamental value.

Due to this fundamental difficulty in attaching bubbles to dividend-paying assets, the rational bubble literature has almost exclusively focused on the so-called ``pure bubble'' model in which the asset pays no dividends, like fiat money.\footnote{\label{fn:RB}This literature starts with the seminal paper of \citet{Samuelson1958}; see \citet{HiranoToda2024JME} for a recent review. See \citet{BrunnermeierOehmke2013} for an introduction to bubbles including other approaches such as heterogeneous beliefs and asymmetric information. \citet{MartinVentura2018} review macroeconomic applications of rational bubble models.} However, pure bubble models are subject to several criticisms.
\begin{enumerate*}
    \item First, the assumption of zero dividends is unrealistic because most assets in the real world other than fiat money or cryptocurrency do pay dividends.
    \item Second, equilibria in pure bubble models are often indeterminate. As shown by \citet{Gale1973}, in pure bubble models, there exists a steady state in which the asset has a positive value, as well as a continuum of equilibria in which the asset value converges to zero.\footnote{This statement is often true but not always. See \citet{Scheinkman1980} and \citet{Santos1990} for counterexamples of indeterminacy, though they require strong assumptions (\eg, no endowment of future goods). \citet{HiranoToda2024EL} examine if their result also holds in production economies and prove that there exist a continuum of monetary equilibria.} This equilibrium indeterminacy makes model predictions non-robust.
    \item Third, with zero dividends, the price-dividend ratio is undefined, which makes it impossible to connect to the econometric literature on bubble detection that uses the price-dividend ratio \citep{PhillipsShiYu2015,PhillipsShi2018,PhillipsShi2020}.
\end{enumerate*}
These criticisms simply show that in describing bubbles attached to real
assets, pure bubble models face fundamental limitations for applications including policy and quantitative analyses \citep{Barlevy2018}. If models cannot be applied, it will be difficult for the literature to develop.\footnote{On this point, we thank Jos\'e Scheinkman and Nobuhiro Kiyotaki for pointing out the limitations of pure bubble models and teaching us how difficult and how valuable it is to prove the existence of rational bubbles attached to real assets with positive dividends in a modern macro-finance framework. Indeed, when one of the authors (Hirano) presented earlier papers on pure bubbles, the reaction from the general audience was harsh, some claiming that pure bubble models are useless in thinking about realistic bubbles attached to stocks, land, and housing due to these criticisms.}

In this chapter, we discuss the recent development of the theory of rational asset price bubbles attached to real assets. By doing so, we would like to bridge the gap between the popular press, policy circles, and academia. We also would like to convince researchers who believe that bubbles cannot possibly occur. In particular, we demonstrate that under some conditions, bubbles are not just possible but inevitable using widely applied macro-finance models.

We emphasize the following three points.
\begin{enumerate*}
    \item First, in \S\ref{sec:necessity} we explain the concept of the \emph{necessity} of bubbles proposed by the recent paper of \citet{HiranoToda2025JPE}. The idea is that, when the dividend growth rate of the asset exceeds the counterfactual autarky interest rate (the interest rate that would prevail in a counterfactual economy without the asset) but is below the economic growth rate, then bubbles necessarily emerge in equilibrium. We also discuss several concrete examples in \S\ref{sec:example}.
    \item Second, bubbles attached to real assets entail a world of nonstationarity, which requires analytical tools. To make the theory appealing to applied researchers, in \S\ref{sec:longrun} we explain how to apply the local stable manifold theorem (which is essentially linearization) to quantitatively study the long-run behavior of asset prices in such models.
    \item Third, we show that the emergence of bubbles and economic development are closely related. To illustrate this point, in \S\ref{sec:longrun} we present an overlapping generations model with a dividend-paying asset. We show that when the incomes of the young become sufficiently high relative to the incomes of the old, \ie, economic development, asset price bubbles become inevitable. Moreover, asset price volatility would be highest with a medium level of economic development. Finally, in \S\ref{sec:land} we present another overlapping generations model with two assets, stocks and land. There are two sectors, a capital-intensive sector (\eg, manufacturing) and a land-intensive sector (\eg, agriculture). In the capital-intensive sector, firms produce the output using capital and labor. Stock shares are issued backed by the returns generated by capital. In the land sector, land produces the output as dividends. Both firm stocks and land are traded as long-lived assets. We show that under certain conditions on the elasticity of substitution in the production function and the productivity growth rates, bubbles in aggregate stock and land prices necessarily emerge.
\end{enumerate*}

\paragraph{Cautionary warning}

We end the introduction with a cautionary warning. Because the term ``bubble'' is commonly used in the popular press, policy circles, and academia, a scientific discourse requires clarity on this term. Unfortunately, scientific language is not standardized, as ``[the] word ``bubble'' is widely used to mean very different things. Some people seem to mean any large movement in prices'' \citep[p.~404]{Cochrane2005}. In this chapter, we use the term ``bubble'' in the classical sense to be defined in \S\ref{sec:bubble}, \ie, a situation in which the asset price exceeds its fundamental value defined by the present discounted value of dividends even though agents are rational and have common beliefs or information. This approach is precisely what constitutes the ``rational bubble'' model pioneered by \citet{Samuelson1958}, \citet{Tirole1985}, \citet{Kocherlakota1992}, \citet{SantosWoodford1997}, and others (Footnote \ref{fn:RB}). However, we do not claim by any means that the ``rational bubble'' approach is the only possibility or is superior to other approaches. We acknowledge that there are various approaches to asset price bubbles including heterogeneous beliefs \citep{ScheinkmanXiong2003,FostelGeanakoplos2012Tranching}, asymmetric information \citep{AbreuBrunnermeier2003,Barlevy2014,AllenBarlevyGale2022,AwayaIwasakiWatanabe2022}, liquidity \citep{RocheteauWright2013,LagosRocheteauWright2017}, self-fulfilling expectations \citep{Krishnamurthy2003,MiaoWang2018}, and uncertainty and learning \citep{PastorVeronesi2009}, among others. We welcome diverse perspectives and hope that the science of bubbles will prosper.

In general, as \citet{HiranoToda2025EJW} emphasize, researchers should refrain from carelessly using the term ``bubble'' without a formal mathematical definition and proof. This misconception occurs in many other areas of economics in which numerical experiments are often confused with formal results without a clear understanding of the underlying conditions generating the outcome of these experiments. Hence, a ``cautionary warning'': readers should
always be aware of such pitfalls as authors are not always clear about the formal mathematical definition and proof. As discussed in \citet{BonaSantos1997}, numerical results may sometimes lead to valuable conjectures, but cannot be seen with the eyes of a theorem or other formal statements. These authors prescribe some useful guidelines as to how numerical experiments should be reported and interpreted.

\section{Rational bubbles as speculation}\label{sec:bubble}

The formal definition of rational bubbles was gien by \citet{SantosWoodford1997}. Here we follow the discussion in \citet{HiranoToda2025EJW}.

\subsection{Formal definitions}\label{subsec:definition}

Consider an infinite-horizon economy with a homogeneous good and time indexed by $t=0,1,\dotsc$. Let $\pi_t$ denote a state price deflator. For instance, in a deterministic economy, $\pi_t$ is the date-0 price of a zero-coupon bond with maturity $t$. Consider an asset with infinite maturity that pays dividend $D_t\ge 0$ and trades at ex-dividend price $P_t$, both in units of the time-$t$ good. Then the no-arbitrage asset pricing equation is given by
\begin{equation}
    \pi_t P_t = \E_t[\pi_{t+1}(P_{t+1}+D_{t+1})]. \label{eq:noarbitrage_pi}
\end{equation}
Solving this equation forward by repeated substitution (and applying the law of iterated expectations) yields
\begin{equation}
    \pi_tP_t=\E_t\sum_{s=t+1}^T \pi_sD_s+\E_t[\pi_TP_T]. \label{eq:P_iter}
\end{equation}
Because all terms are nonnegative, the sum in \eqref{eq:P_iter} from $s=t+1$ to $s=T$ is
\begin{enumerate*}
    \item increasing in $T$ and
    \item bounded above by $\pi_tP_t$,
\end{enumerate*}
so it converges almost surely as $T\to\infty$. Therefore the \emph{fundamental value} of the asset
\begin{equation}
    V_t\coloneqq \frac{1}{\pi_t}\E_t\sum_{s=t+1}^\infty \pi_sD_s \label{eq:Vt}
\end{equation}
is well-defined, and letting $T\to\infty$ in \eqref{eq:P_iter}, we obtain $P_t=V_t+B_t$, where we define the \emph{asset price bubble} as
\begin{equation}
    B_t\coloneqq \lim_{T\to\infty} \frac{1}{\pi_t}\E_t[\pi_TP_T]\ge 0.\label{eq:Bt}
\end{equation}
That is, an asset price bubble is equal to the difference between the market price of the asset and its fundamental value (\ie, the present value of dividends). By definition, there is no bubble at time $t$ if and only if the \emph{no-bubble condition}
\begin{equation}
    \lim_{T\to\infty} \E_t[\pi_TP_T]=0 \label{eq:TVC}
\end{equation}
holds. This is the mathematical formalization of the idea explained in \citet{Stiglitz1990}. Conditions at infinity like \eqref{eq:TVC} are often called \emph{transversality conditions} \citep{MagillQuinzii1994,MagillQuinzii1996,SantosWoodford1997,Montrucchio2004,GiglioMaggioriStroebel2016}. In our earlier papers \citep{HiranoToda2024JME,HiranoToda2025JPE}, we referred to \eqref{eq:TVC} as the  \emph{transversality condition for asset pricing} (to distinguish from the transversality condition for optimality in infinite-horizon optimal control problems; see \citet[Ch.~15]{TodaEME}). To prevent confusion, here we simply refer to \eqref{eq:TVC} as the no-bubble condition.

The economic meaning of the bubble component $B_t$ in \eqref{eq:Bt} is that it captures a speculative aspect, that is, agents buy the asset now for the purpose of resale in the future, rather than for the purpose of receiving dividends. When the no-bubble condition \eqref{eq:TVC} holds, the aspect of speculation becomes negligible and asset prices are determined only by factors that are backed in equilibrium, namely future dividends. On the other hand, if $\lim_{T\to\infty}\E_t[\pi_TP_T]>0$, equilibrium asset prices contain a speculative aspect backed by nothing and are strictly higher than the present discount value of the dividend stream.

\subsection{Bubble Characterization Lemma}\label{subsec:characterization}

To prove the existence of rational bubbles, we need to prove $P>V$, or equivalently, verify the violation of the no-bubble condition \eqref{eq:TVC}. For an asset that pays no dividends ($D=0$, pure bubble), because the fundamental value is necessarily zero, showing $P>0$ suffices. However, for dividend-paying assets, \ie, real assets such as stocks, land, and housing, the verification of the violation of the no-bubble condition is not easy because it is cumbersome to calculate the state price deflator $\pi_t$. Fortunately, in economies without aggregate uncertainty, there is a very simple characterization due to \citet[Proposition 7]{Montrucchio2004}.\footnote{\citet{Montrucchio2004} and \citet{CruzRambaud2013} consider the case with aggregate uncertainty but they focus on sufficient conditions for the nonexistence of bubbles.} The statement and proof below follows \citet[Lemma 1]{HiranoToda2025JPE}.

\begin{lem}[Bubble characterization]\label{lem:bubble}
In an economy without aggregate uncertainty, if $P_t>0$ for all $t$, the asset price exhibits a rational bubble if and only if $\sum_{t=1}^\infty D_t/P_t<\infty$.
\end{lem}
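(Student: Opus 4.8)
The plan is to work directly with the no-arbitrage equation \eqref{eq:noarbitrage_pi}, which in the absence of aggregate uncertainty becomes the deterministic recursion $\pi_t P_t = \pi_{t+1}(P_{t+1}+D_{t+1})$. Since $P_t>0$ and $D_t\ge 0$, dividing through gives $\pi_{t+1}/\pi_t = P_t/(P_{t+1}+D_{t+1})$, so the telescoping product yields a closed form $\pi_T P_T = \pi_0 P_0 \prod_{t=1}^T P_t/(P_t+D_t)$ (after shifting indices appropriately). The key observation is then that the no-bubble condition \eqref{eq:TVC}, namely $\lim_{T\to\infty}\pi_T P_T = 0$, is equivalent to the divergence of the infinite product $\prod_{t=1}^\infty \left(1+D_t/P_t\right)$ to $\infty$, equivalently $\prod_{t=1}^\infty P_t/(P_t+D_t) = 0$.

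The second step is the standard real-analysis fact connecting infinite products and series: for nonnegative terms $a_t = D_t/P_t \ge 0$, the product $\prod_{t=1}^\infty(1+a_t)$ converges (to a finite nonzero limit) if and only if $\sum_{t=1}^\infty a_t < \infty$. Hence $\prod_{t=1}^\infty(1+D_t/P_t) = \infty$ if and only if $\sum_{t=1}^\infty D_t/P_t = \infty$. Combining with the first step: the no-bubble condition holds $\iff$ $\sum_{t=1}^\infty D_t/P_t = \infty$; taking the contrapositive, a rational bubble exists (i.e.\ \eqref{eq:TVC} fails) $\iff$ $\sum_{t=1}^\infty D_t/P_t < \infty$, which is exactly the claim.

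One subtlety worth handling carefully is the role of the state price deflator $\pi_t$. Without aggregate uncertainty, $\pi_{t+1}/\pi_t$ is pinned down by the asset itself through the no-arbitrage relation, so I do not actually need to know $\pi_t$ from preferences or market data — the ratio is determined endogenously by the price-dividend data. I would make explicit that $\pi_t>0$ (true for any state price deflator), so all divisions are legitimate, and that $B_t$ in \eqref{eq:Bt} is nonnegative by construction, so ``bubble exists'' is genuinely equivalent to ``$\lim \pi_TP_T>0$'' rather than merely ``$\lim$ fails to be zero in some oscillatory way'' — monotonicity of $\pi_TP_T$ (which decreases in $T$ along this deterministic path since each factor $P_t/(P_t+D_t)\le 1$) guarantees the limit exists in $[0,\infty)$.

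The main obstacle, though it is a mild one, is the infinite-product/series lemma in the regime where the terms need not be summable and need not be bounded: one must be slightly careful that $\sum a_t=\infty$ forces $\prod(1+a_t)=\infty$ even when individual $a_t$ are large (this direction is immediate since $1+a_t \ge 1$ and partial products are nondecreasing and dominate $1+\sum_{t\le T}a_t$ is false in general — rather one uses $\prod(1+a_t)\ge 1+\sum a_t$ which \emph{is} valid for nonnegative terms by expansion), while the reverse direction uses $1+a_t\le e^{a_t}$ to get $\prod(1+a_t)\le \exp(\sum a_t)<\infty$. Assembling these two elementary bounds cleanly is the only real content; everything else is bookkeeping with the telescoping product.
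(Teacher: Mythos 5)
Your proposal is correct and follows essentially the same route as the paper's proof: telescoping the deterministic no-arbitrage recursion into the product $\prod_{t=1}^T(1+D_t/P_t)=\pi_0P_0/(\pi_TP_T)$ and then sandwiching it between $1+\sum_{t=1}^T D_t/P_t$ and $\exp\bigl(\sum_{t=1}^T D_t/P_t\bigr)$. The only cosmetic difference is that the paper first passes to $q_t=\E[\pi_t]$ before telescoping, which is immaterial in a deterministic economy.
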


\begin{proof}
If the asset is risk-free, taking the unconditional expectations of \eqref{eq:noarbitrage_pi} and setting $q_t=\E[\pi_t]>0$ (which equals the date-0 price of a zero-coupon bond with maturity $t$), we obtain
\begin{equation}
    q_tP_t = q_{t+1}(P_{t+1}+D_{t+1}). \label{eq:noarbitrage}
\end{equation}
Then by the same argument as in \S\ref{subsec:definition} and using $q_0=1$, we obtain
\begin{equation}
    P_0=\sum_{t=1}^T q_tD_t+q_TP_T, \label{eq:P0_iter}
\end{equation}
and there is no bubble if the no-bubble condition $\lim_{T\to\infty}q_TP_T=0$ holds.

Changing $t$ to $t-1$ in the no-arbitrage condition \eqref{eq:noarbitrage} and dividing both sides by $q_tP_t>0$, we obtain $q_{t-1}P_{t-1}/q_tP_t=1+D_t/P_t$. Multiplying from $t=1$ to $t=T$, expanding terms, and using $1+x\le \e^x$, we obtain
\begin{equation*}
    1+\sum_{t=1}^T\frac{D_t}{P_t}\le \frac{q_0P_0}{q_TP_T}=\prod_{t=1}^T\left(1+\frac{D_t}{P_t}\right)\le \exp\left(\sum_{t=1}^T\frac{D_t}{P_t}\right).
\end{equation*}
Letting $T\to \infty$, we have $\lim_{T\to\infty}q_TP_T>0$ if and only if $\sum_{t=1}^\infty D_t/P_t<\infty$.
\end{proof}

Clearly, $\sum_{t=1}^\infty D_t/P_t<\infty$ only if $D_t/P_t\to 0$. In other words, to attach a rational bubble to a dividend-paying asset, the dividend yield $D_t/P_t$ must converge to zero, or the price-dividend ratio $P_t/D_t$ must diverge to infinity. An analogous result also holds in continuous-time models \citep[Appendix B]{HiranoToda2025EJW}. An important implication of the Bubble Characterization Lemma is that as long as the price-dividend ratio remains bounded, rational bubbles attached to dividend-paying assets can never occur, regardless of the model setting.

\subsection{Bubble characterization with variable shares}

So far, we assumed that the asset has an infinite maturity. In reality, firms sometimes issue or repurchase shares. In this section, we study how the variability in shares affects the characterization of bubbles.

As in \S\ref{subsec:characterization}, consider an infinite-horizon, deterministic economy with time-$t$ Arrow-Debreu price denoted by $q_t>0$. Let there be a firm with stock price and dividend per share at time $t$ denoted by $p_t,d_t$. The absence of arbitrage implies
\begin{equation}
    q_tp_t=q_{t+1}(p_{t+1}+d_{t+1}). \label{eq:noarbitrage_stock}
\end{equation}
Let $S_t>0$ be the number of stock shares outstanding at time $t$. Let $C_t$ be the free cash flow that the firm generates at time $t$. Then accounting implies
\begin{equation}
    C_t=\underbrace{d_tS_{t-1}}_\text{total dividend}+\underbrace{p_t(S_{t-1}-S_t)}_\text{net stock repurchase}. \label{eq:cashflow}
\end{equation}
Note that \eqref{eq:cashflow} can be rewritten as
\begin{equation}
    p_tS_t+C_t=(p_t+d_t)S_{t-1}. \label{eq:cashflow2}
\end{equation}
Define the firm value by the total market capitalization, which is $P_t\coloneqq p_tS_t$. Multiplying $S_t$ to both sides of \eqref{eq:noarbitrage_stock} and using \eqref{eq:cashflow2}, we obtain
\begin{align}
    q_tP_t=q_tp_tS_t&=q_{t+1}(p_{t+1}+d_{t+1})S_t\notag \\
    &=q_{t+1}(p_{t+1}S_{t+1}+C_{t+1})\notag \\
    &=q_{t+1}(P_{t+1}+C_{t+1}). \label{eq:noarbitrage_firm}
\end{align}
By comparing the no-arbitrage conditions \eqref{eq:noarbitrage}, \eqref{eq:noarbitrage_stock}, \eqref{eq:noarbitrage_firm}, and applying Lemma \ref{lem:bubble}, it is clear that
\begin{subequations}
    \begin{align}
        \exists\text{bubble in firm stock}&\iff \lim_{t\to\infty}q_tp_t>0\iff \sum_{t=1}^\infty \frac{d_t}{p_t}<\infty, \label{eq:bubble_stock} \\
        \exists\text{bubble in firm value}&\iff \lim_{t\to\infty}q_tP_t>0\iff \sum_{t=1}^\infty \frac{C_t}{P_t}<\infty. \label{eq:bubble_value}
    \end{align}
\end{subequations}
Noting that $P_t=p_tS_t$, the conditions \eqref{eq:bubble_stock} and \eqref{eq:bubble_value} are not necessarily equivalent. Indeed, we have the following proposition.

\begin{prop}\label{prop:bubble_MM}
The following statements are true.
\begin{enumerate}
    \item\label{item:MM1} If $0<\liminf_{t\to\infty}S_t\le \limsup_{t\to\infty}S_t<\infty$, then there is a bubble in firm stock if and only if there is a bubble in firm value.
    \item\label{item:MM2} If $\liminf_{t\to\infty}S_t=0$, there is no bubble in firm value.
    \item\label{item:MM3} If $\limsup_{t\to\infty}S_t=\infty$, there is no bubble in firm stock.
\end{enumerate}
\end{prop}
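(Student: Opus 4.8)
The plan is to reduce everything to the single algebraic identity $q_tP_t=(q_tp_t)S_t$ together with the characterizations \eqref{eq:bubble_stock} and \eqref{eq:bubble_value}: a bubble in firm stock means $\lim_{t\to\infty}q_tp_t>0$, and a bubble in firm value means $\lim_{t\to\infty}q_tP_t>0$. The first thing I would record is that both sequences $(q_tp_t)_t$ and $(q_tP_t)_t$ are non-increasing --- this is immediate from the no-arbitrage relations \eqref{eq:noarbitrage_stock} and \eqref{eq:noarbitrage_firm} once $d_t\ge 0$ and $C_t\ge 0$ --- so the limits $a\coloneqq\lim_t q_tp_t\ge 0$ and $b\coloneqq\lim_t q_tP_t\ge 0$ exist, and moreover $0\le a\le q_tp_t\le q_0p_0$ and $0\le b\le q_tP_t\le q_0P_0$ for all $t$. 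The whole proposition is then a matter of comparing $a$ and $b$ through the factor $S_t$.

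For part \ref{item:MM1}, since $\liminf_t S_t>0$ and $\limsup_t S_t<\infty$ I can fix constants $0<m\le M<\infty$ with $m\le S_t\le M$ for all large $t$; multiplying by $q_tp_t\ge 0$ gives $m\,q_tp_t\le q_tP_t\le M\,q_tp_t$, and letting $t\to\infty$ yields $ma\le b\le Ma$, whence $a>0\iff b>0$. For part \ref{item:MM2}, I would take a subsequence $(t_k)$ along which $S_{t_k}\to 0$; then $q_{t_k}P_{t_k}=(q_{t_k}p_{t_k})S_{t_k}\le q_0p_0\,S_{t_k}\to 0$, and since $\lim_t q_tP_t$ exists it must equal $0$, i.e., no bubble in firm value. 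For part \ref{item:MM3}, I would argue by contradiction: if there were a bubble in firm stock then $a>0$, so along a subsequence with $S_{t_k}\to\infty$ we would get $q_{t_k}P_{t_k}=(q_{t_k}p_{t_k})S_{t_k}\ge a\,S_{t_k}\to\infty$, contradicting $q_tP_t\le q_0P_0$.

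I do not expect a genuine obstacle here; the one point that needs care is that $(S_t)$ is not assumed to converge, so one cannot write $\lim_t(q_tp_tS_t)$ as a product of limits. The fix is exactly the sandwiching in part \ref{item:MM1} and the passage to suitable subsequences in parts \ref{item:MM2}--\ref{item:MM3}, all legitimate because the limits of $(q_tp_t)$ and $(q_tP_t)$ individually exist by monotonicity. It is worth flagging that this monotonicity --- and hence the boundedness $q_tp_t\le q_0p_0$, $q_tP_t\le q_0P_0$ used in parts \ref{item:MM2} and \ref{item:MM3} --- relies on $d_t\ge 0$ and $C_t\ge 0$; if free cash flow can go negative one should instead invoke directly the fact, established in the derivation of \eqref{eq:bubble_value}, that $\lim_t q_tP_t$ exists and the sequence is therefore bounded.
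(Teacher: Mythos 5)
Your proof is correct and follows essentially the same route as the paper's: both rest on the identity $q_tP_t=(q_tp_t)S_t$, the bounds $q_tp_t\le p_0$ and $q_tP_t\le P_0$ obtained from iterating the no-arbitrage relations with nonnegative payouts, sandwiching for part \ref{item:MM1}, and the existence of the limits of $q_tp_t$ and $q_tP_t$ to handle the non-convergent $S_t$ in parts \ref{item:MM2} and \ref{item:MM3}. Your closing remark about $C_t\ge 0$ is a fair caveat that the paper leaves implicit, but it does not change the argument.
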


The following two examples show that, when condition \ref{item:MM1} does not hold, it is possible that a bubble emerges in either the firm value or the firm stock but not in the other.

\begin{exmp}[Bubbly stock in fundamental firm]
Let $R>1$ be the (constant) gross risk-free rate and let the firm generate the (constant) cash flow $C>0$. The fundamental value of the firm is then constant at
\begin{equation*}
    P_0=\sum_{t=1}^\infty R^{-t}C=\frac{C}{R-1}.
\end{equation*}
Let the firm pay no dividends ($d_t=0$) and repurchase stocks using cash flows so that $S_t=R^{-t}$. Then the accounting identity \eqref{eq:cashflow} implies
\begin{equation*}
    p_t=\frac{C}{S_{t-1}-S_t}=\frac{CR^t}{R-1}=P_0R^t.
\end{equation*}
Obviously, $P_t=P_0=p_tS_t$ is satisfied. Because the stock pays no dividends but has a positive price, by definition there is a bubble. Yet the firm value reflects fundamentals. This example can be thought of as Berkshire Hathaway, which pays no dividends.
\end{exmp}

\begin{exmp}[Fundamental stock in bubbly firm]
Let $R>1$ be the (constant) gross risk-free rate and let the firm generate the (constant) cash flow $C>0$. Define the firm value by
\begin{equation*}
    P_t=\frac{C}{R-1}+bR^t,
\end{equation*}
where $b>0$ reflects a firm value bubble. Let the number of shares be $S_t=R^t$. Then the stock price is
\begin{equation*}
    p_t=\frac{P_t}{S_t}=b+\frac{C}{R-1}R^{-t}.
\end{equation*}
The accounting identity \eqref{eq:cashflow} implies
\begin{equation*}
    d_t=\frac{C-p_t(S_{t-1}-S_t)}{S_{t-1}}=(R-1)b+C(R+1)R^{-t}.
\end{equation*}
Since $d_t/p_t\to R-1>0$ as $t\to\infty$, by \eqref{eq:bubble_stock} the stock price reflects fundamentals.
\end{exmp}

\section{Example economies}\label{sec:example}

To convince skeptics that bubbles are possible in relatively standard macro-finance models, and sometimes inevitable, this section presents several examples with bubbles attached to real assets.

\subsection{OLG model with log utility}\label{subsec:example_log}

The first example, which appears in \citet[\S III.A]{HiranoToda2025JPE}, is a simple variant of the \citet{Samuelson1958} overlapping generations (OLG) model with money, except that the asset pays dividends that are shrinking relative to the endowments in the economy.

The initial old are endowed with a unit supply of an asset with infinite maturity. At time $t$, the young are endowed with $a_t>0$ units of the consumption good, the old none, and the asset pays dividend $D_t>0$. Generation $t$ has utility function
\begin{equation}
    U(y_t,z_{t+1})=(1-\beta)\log y_t+\beta \log z_{t+1}, \label{eq:logutility}
\end{equation}
where $(y_t,z_{t+1})$ denote the consumption when young and old.\footnote{The notation may be nonstandard, but the mnemonic is that $y$ is the first letter of ``young'' and $z$ is the next alphabet.} A competitive equilibrium with sequential trading is defined by a sequence $\set{(P_t,x_t,y_t,z_t)}_{t=0}^\infty$ of asset price $P_t$, asset holdings of young $x_t$, and consumption of young and old $(y_t,z_t)$ such that
\begin{enumerate*}
    \item the young maximize utility subject to the budget constraints $y_t+P_tx_t=a_t$ and $z_{t+1}=(P_{t+1}+D_{t+1})x_t$,
    \item commodity market clears: $y_t+z_t=a_t+D_t$, and
    \item asset market clears: $x_t=1$.
\end{enumerate*}
The following proposition provides a necessary and sufficient condition for bubbles.

\begin{prop}\label{prop:textbook}
There exists a unique equilibrium, and the asset price exhibits a bubble if and only if $\sum_{t=1}^\infty D_t/a_t<\infty$.
\end{prop}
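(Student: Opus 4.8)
The plan is to solve the young agent's problem explicitly using log utility, derive a first-order difference equation for the asset price, and then invoke the Bubble Characterization Lemma (Lemma~\ref{lem:bubble}) to translate the no-bubble condition into the stated summability criterion. With Cobb--Douglas (log) preferences the young agent splits wealth in fixed proportions: consumption when young is $y_t=(1-\beta)a_t$ and savings $P_tx_t=\beta a_t$. Imposing asset market clearing $x_t=1$ immediately gives the equilibrium price $P_t=\beta a_t>0$ for all $t$. This already establishes existence; uniqueness follows because the savings rule is uniquely pinned down by log utility and the market-clearing condition leaves no freedom. The one subtlety is the initial period / terminal behavior: one should check that this candidate sequence, together with $z_t=(P_t+D_t)x_{t-1}=(\beta a_t+D_t)$, satisfies the old agents' budget constraint and the commodity market clearing $y_t+z_t=a_t+D_t$, which it does since $(1-\beta)a_t+\beta a_t+D_t=a_t+D_t$.

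Next, since $P_t=\beta a_t>0$ for every $t$ and the economy is deterministic (hence without aggregate uncertainty), Lemma~\ref{lem:bubble} applies directly: the asset price exhibits a rational bubble if and only if $\sum_{t=1}^\infty D_t/P_t<\infty$. Substituting $P_t=\beta a_t$ gives
\begin{equation*}
    \sum_{t=1}^\infty \frac{D_t}{P_t}=\frac{1}{\beta}\sum_{t=1}^\infty \frac{D_t}{a_t},
\end{equation*}
so the series on the left converges if and only if $\sum_{t=1}^\infty D_t/a_t<\infty$, which is exactly the claimed condition. This completes the argument.

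I do not expect any genuine obstacle here; the result is essentially a corollary of Lemma~\ref{lem:bubble} once the equilibrium price is computed, and the log-utility structure makes that computation a one-liner. The only point requiring a little care is the verification step: one must confirm that the no-arbitrage asset pricing equation \eqref{eq:noarbitrage} (equivalently \eqref{eq:noarbitrage_pi} specialized to the deterministic case) is consistent with the equilibrium we constructed, i.e.\ that the implied gross return $R_{t+1}=(P_{t+1}+D_{t+1})/P_t=(\beta a_{t+1}+D_{t+1})/(\beta a_t)$ is indeed the rate at which the young are willing to save. Under log utility the Euler equation is automatically satisfied by the constant savings rate for \emph{any} positive return sequence, so this is immediate and does not constrain $\set{a_t,D_t}$ beyond positivity. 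Hence the proof reduces to: (i) solve for $P_t=\beta a_t$; (ii) note $P_t>0$ and apply Lemma~\ref{lem:bubble}; (iii) rescale the series by $1/\beta$.
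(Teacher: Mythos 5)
Your proof is correct and follows essentially the same route as the paper: log utility pins down $y_t=(1-\beta)a_t$, market clearing gives $P_t=\beta a_t$, and Lemma~\ref{lem:bubble} converts the bubble condition into $\sum_{t=1}^\infty D_t/a_t<\infty$. The extra verification of commodity market clearing and the Euler equation is fine but not needed beyond what the paper records.
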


\begin{proof}
Due to log utility, the optimal consumption of the young is $y_t=(1-\beta)a_t$. Asset market clearing and the budget constraint of the young imply $P_t=P_tx_t=a_t-y_t=\beta a_t$. Clearly, the equilibrium is unique. Since the dividend yield is $D_t/P_t=D_t/(\beta a_t)$, the claim follows from Lemma \ref{lem:bubble}.
\end{proof}

A similar example to Proposition \ref{prop:textbook} appears in \citet[\S IV]{AllenBarlevyGale2025} with $\beta=1$ (no young consumption).

\subsection{OLG model with linear utility}\label{subsec:example_linear}

The second example is based on that in \citet[\S7]{Wilson1981}. As far as we are aware, this is the first example of a rational bubble attached to a dividend-paying asset.

This example is similar to \S\ref{subsec:example_log} except that the utility function
\begin{equation*}
    U(y_t,z_{t+1})=y_t+\beta z_{t+1}
\end{equation*}
is linear, endowments are $(a_t,b_t)=(aG^t,bG^t)$ with $a>0$, $b\ge 0$, and dividends are $D_t=DG_d^t$ with $D>0$, $G_d>0$. The following proposition provides a sufficient condition for the uniqueness of equilibrium and the necessity of bubbles. In what follows, longer proofs are deferred to Appendix \ref{sec:proof}.

\begin{prop}\label{prop:wilson}
If $1/\beta<G_d<G$, then the unique equilibrium asset price is $P_t=aG^t$, and there is a bubble.
\end{prop}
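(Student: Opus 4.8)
The plan is to exploit the linearity of preferences, which turns each generation's portfolio problem into a bang-bang (corner) problem, and then to pin down the price path by combining market clearing with a nonnegativity argument. Throughout, the equilibrium concept is the obvious adaptation of the one in \S\ref{subsec:example_log}, with the old endowment $b_{t+1}$ added to the second-period budget constraint $z_{t+1}=(P_{t+1}+D_{t+1})x_t+b_{t+1}$.

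First I would reduce the young's problem. Imposing asset market clearing $x_t=1$, feasibility gives $y_t=a_t-P_t\ge 0$, hence $P_t\le a_t=aG^t$, and $z_{t+1}=(P_{t+1}+D_{t+1})+b_{t+1}\ge 0$ automatically. Substituting the budget constraints, generation $t$'s objective is $a_t+\beta b_{t+1}+x_t\Delta_t$ with $\Delta_t\coloneqq\beta(P_{t+1}+D_{t+1})-P_t$, which is affine in $x_t$. If $\Delta_t<0$ the young would pick the smallest admissible $x_t$, which is $\le 0$ (the lower bound coming from $z_{t+1}\ge 0$, or from a short-sale constraint when $b=0$), contradicting $x_t=1$; so $\Delta_t\ge 0$ in any equilibrium. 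If in addition $P_t<a_t$, then $\Delta_t>0$ would make the young want $x_t=a_t/P_t>1$, again impossible, so $P_t<a_t$ forces $\Delta_t=0$, i.e.\ $P_t=\beta(P_{t+1}+D_{t+1})$.

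Second, for uniqueness I would show the Euler-equality regime cannot occur. Suppose $P_\tau<a_\tau$ for some $\tau$. Then $\Delta_\tau=0$ gives $P_{\tau+1}=P_\tau/\beta-D_{\tau+1}<aG^\tau/\beta<aG^{\tau+1}=a_{\tau+1}$, where the last inequality uses $\beta G>1$, a consequence of $1/\beta<G_d<G$. By induction $P_t<a_t$, hence $P_t=\beta(P_{t+1}+D_{t+1})$, for all $t\ge\tau$. Iterating forward and using $P_T\ge 0$ yields $P_\tau\ge\sum_{s=\tau+1}^{T}\beta^{s-\tau}D_s=DG_d^{\tau}\sum_{k=1}^{T-\tau}(\beta G_d)^k$, which diverges as $T\to\infty$ since $\beta G_d>1$ — contradicting $P_\tau<\infty$. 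Hence $P_t=a_t=aG^t$ for every $t$.

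Third, I would verify that $P_t=aG^t$ is an equilibrium and conclude with the bubble characterization. With this price, $y_t=0$, $z_{t+1}=(a+b)G^{t+1}+DG_d^{t+1}\ge 0$, $x_t=1$ clears the asset market (and the goods market follows by Walras' law), and $\Delta_t=\beta(aG^{t+1}+DG_d^{t+1})-aG^t\ge aG^t(\beta G-1)>0$, so bang-bang optimality indeed dictates $x_t=a_t/P_t=1$. Finally $P_t>0$ and $\sum_{t=1}^{\infty}D_t/P_t=(D/a)\sum_{t=1}^{\infty}(G_d/G)^t<\infty$ because $G_d<G$, so Lemma~\ref{lem:bubble} gives that the asset price exhibits a bubble. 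The main obstacle is the uniqueness step: one has to notice that $P_t<a_t$ is self-propagating (this is where $\beta G>1$ is used) and that the induced Euler equality would equate the price to an infinite fundamental value (this is where $\beta G_d>1$ is used), leaving the corner path $P_t=aG^t$ as the only survivor.
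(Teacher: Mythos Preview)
Your proof is correct and follows essentially the same route as the paper's: reduce the linear-utility problem to a bang-bang choice, show that the interior (Euler-equality) regime forces the price to dominate a divergent series because $\beta G_d>1$, and then verify the corner path $P_t=aG^t$ and apply Lemma~\ref{lem:bubble}. The only cosmetic difference is the induction direction: the paper propagates the \emph{corner} condition $R_t>1/\beta$ \emph{backward} in time (using $G>G_d$), whereas you propagate the \emph{interior} condition $P_\tau<a_\tau$ \emph{forward} (using $\beta G>1$); both are consequences of $1/\beta<G_d<G$ and lead to the identical dichotomy and contradiction.
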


\subsection{OLG model with capital and labor}

In an influential paper, \citet{Tirole1985} studied under what conditions bubbles emerge in an overlapping generations (OLG) model with capital accumulation and showed that bubbles can solve the well-known capital over-accumulation problem. The idea is to introduce a dividend-paying asset in the \citet{Diamond1965} model. \citet[Proposition 1]{Tirole1985} connects the population growth rate, dividend growth rate, and the natural interest rate (the steady state interest rate in the \citet{Diamond1965} model without the asset) to the existence of a unique bubbleless equilibrium (part (a)), the existence of a continuum of equilibria including bubbleless and asymptotically bubbly one (part (b)), and the existence of a unique asymptotically bubbly equilibrium (part (c)).

\citet*{BosiHa-HuyLeVanPhamPham2018} extend \citet{Tirole1985}'s overlapping generations production economy with capital and labor to the case with altruism (which is not essential for bubbles) and a dividend-paying asset. Their Proposition 2 derives properties of equilibria with general utility and production functions. By specializing to the Cobb-Douglas utility and production functions, their Example 2 provides bubbly equilibria with a dividend-paying asset. \citet[\S V.A]{HiranoToda2025JPE} study \citet{Tirole1985}'s model with a dividend-paying asset. With log utility and general production function, their Theorem 3 shows that equilibria with $\liminf_{t\to\infty}K_t>0$ are bubbly under some conditions on the dividend growth rate. However, Example 1 of \citet*{BosiHa-HuyLeVanPhamPham2018} shows a case with $\liminf_{t\to\infty}K_t=0$, which is not considered in \citet{Tirole1985}'s original analysis. This suggests that \citet{Tirole1985}'s original proof may be incomplete. In fact, in a recent paper, \citet{PhamTodaComment} construct a counterexample to Proposition 1(c) of \citet{Tirole1985} based on a closed-form solution but restore it under the additional assumption that initial capital is large enough and dividends are small enough. \citet{PhamTodaTirole} discuss other results including the existence of equilibrium and sufficient conditions for the existence or nonexistence of equilibria with or without bubbles.

\citet*{JiangMiaoZhang2022} study a two-sector overlapping generations model in which one sector produces the consumption good from capital and labor, and the other sector produces housing (which serves as a bubble asset) using land and labor. There is spillover between the two sectors through infrastructure investment. Their \S3 analytically solves a two-period OLG model in which housing is a pure bubble asset. In \S4, for a quantitative application, the authors extend the model to a $T$-period OLG model with rents and ``stochastic bubbles''. However, rents are exogenous (writing on p.~1204 ``We do not consider endogenous rents because they would make a bubbly model very hard to analyze''). Moreover, their results are only numerical, and hence they fit  into our ``cautionary warning'' in the introduction. These numerical exercises could be taken  as suggestive of the existence of stochastic bubbles and multiple equilibria (either fundamental or bubbly).

\subsection{Infinite-horizon model}\label{subsec:example_bewley}

In general, it is more difficult to generate asset price bubbles in infinite-horizon models than in OLG models. This is because in a model with infinitely-lived agents and short-sales constraints, if a bubbly equilibrium exists, then there exist no agent who can permanently reduce asset holdings.\footnote{The formal statement appears in \citet[Proposition 3]{Kocherlakota1992}. \citet[Theorem 4.1]{Kamihigashi2018} extends this result to a general setting assuming only the monotonicity of preferences. In OLG models with finite lives, there is no agent who can permanently reduce asset holdings because the old liquidate asset holdings before exiting the economy. The short-sales constraint is implicit in OLG models.} In other words, the short-sales constraint must bind infinitely often, implying that financial constraints are essential for generating asset price bubbles in infinite-horizon economies.

Here, to make the analysis self-contained, we briefly explain \citet{Bewley1980}'s infinite-horizon, two-agent model with alternating endowments. (See also \citet[\S3.2]{HiranoToda2024JME}.) The agents have the utility function
\begin{equation}
    \sum_{t=0}^\infty \beta^t u(c_t), \label{eq:utility}
\end{equation}
where $\beta\in (0,1)$ and $u:[0,\infty)\to [-\infty,\infty)$ is twice differentiable on $(0,\infty)$ with $u'>0$, $u''<0$, $u'(0)=\infty$, and $u'(\infty)=0$. Suppose that there are two agents with endowments alternating as follows:
\begin{align*}
    &\text{Time:} &&(0,1,2,3,\dotsc),\\
    &\text{Agent 1:} && (a,b,a,b,\dotsc),\\
    &\text{Agent 2:} &&(b,a,b,a,\dotsc),
\end{align*}
where $a>b\ge 0$. Suppose there is a unit supply of intrinsically worthless asset (money), which is initially held by agent 2. Suppose the asset cannot be shorted. An equilibrium is defined by sequences of consumption allocations and asset prices such that agents optimize and markets clear. We omit the details as they are standard.

We seek an equilibrium in which the asset trades at a constant price $P>0$. At any date $t$, call the agent with endowment $a$ ($b$) ``rich'' (``poor''). In this economy, because endowments are alternating between high and low values, the rich agent has an incentive to save. Therefore conjecture that the rich agent buys the entire asset from the poor agent, and hence the equilibrium consumption allocation is
\begin{subequations}\label{eq:bewley_allocation}
\begin{align}
    &\text{Agent 1:} && (a-P,b+P,a-P,b+P,\dotsc),\\
    &\text{Agent 2:} &&(b+P,a-P,b+P,a-P,\dotsc).
\end{align}
\end{subequations}
Under this conjecture, because the rich agent holds a long position of the asset, the Euler equation must hold:
\begin{equation}
    u'(a-P)=\beta u'(b+P). \label{eq:euler_rich}
\end{equation}
Because the short-sales constraint binds for the poor agent, the Euler inequality becomes
\begin{equation}
    u'(b+P)\ge \beta u'(a-P). \label{eq:euler_poor}
\end{equation}
The following proposition shows that, when $a$ is sufficiently high, there exists a unique $P>0$ satisfying these conditions.

\begin{prop}\label{prop:bewley}
If $u'(a)<\beta u'(b)$, there exists a unique $P\in (0,a)$ satisfying \eqref{eq:euler_rich} and \eqref{eq:euler_poor}. The allocation \eqref{eq:bewley_allocation} together with asset price $P>0$ constitute an equilibrium.
\end{prop}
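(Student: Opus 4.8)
The plan is to collapse the whole statement to a one-dimensional root-finding problem in $P$. Define $f(P)\coloneqq u'(a-P)-\beta u'(b+P)$ for $P\in(0,a)$; by construction $P$ solves the rich agent's Euler equation \eqref{eq:euler_rich} if and only if $f(P)=0$, and any such $P\in(0,a)$ makes the candidate allocation \eqref{eq:bewley_allocation} strictly positive since $a-P>0$ and $b+P>0$.

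First I would analyze the boundary behavior of $f$. As $P\downarrow 0$, continuity of $u'$ on $(0,\infty)$ gives $f(P)\to u'(a)-\beta u'(b)<0$ when $b>0$ (this is precisely the hypothesis), while if $b=0$ then $\beta u'(b+P)\to\beta u'(0^+)=+\infty$, so $f(P)\to-\infty$; in either case $f<0$ near $0$. As $P\uparrow a$, the Inada condition $u'(0^+)=\infty$ forces $u'(a-P)\to\infty$ whereas $\beta u'(b+P)\to\beta u'(a+b)<\infty$, so $f(P)\to+\infty$. Since $f$ is continuous on $(0,a)$, the intermediate value theorem yields a zero $P\in(0,a)$. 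For uniqueness, differentiate: $f'(P)=-u''(a-P)-\beta u''(b+P)>0$ because $u''<0$, so $f$ is strictly increasing and the zero is unique.

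Next I would note that the poor agent's Euler inequality \eqref{eq:euler_poor} is automatic at this $P$: from \eqref{eq:euler_rich}, $u'(b+P)=\beta^{-1}u'(a-P)$, and since $\beta\in(0,1)$ we have $\beta^{-1}>\beta$ and $u'(a-P)>0$, hence $u'(b+P)>\beta u'(a-P)$, i.e.\ \eqref{eq:euler_poor} holds (with strict inequality). It remains to confirm that the conjectured objects form an equilibrium. Market clearing is immediate: the two consumptions in \eqref{eq:bewley_allocation} sum to $a+b$ each period, and the portfolio in which the period-$t$ rich agent holds the unit of money and the poor agent holds zero clears the asset market while respecting the no-short-sales constraint. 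For optimality, along \eqref{eq:bewley_allocation} each agent alternates between a ``rich'' node, where the position is interior and \eqref{eq:euler_rich} is the Euler equation, and a ``poor'' node, where the constraint binds and \eqref{eq:euler_poor} is the corresponding Euler inequality; by the symmetry of the environment the same two conditions govern both agents. Because $u$ is concave, consumption is bounded above by $a$ and below by $b+P>0$ (so marginal utilities are bounded), and $\beta<1$, the transversality condition $\lim_{t\to\infty}\beta^t u'(c_t)P=0$ holds, so these first-order conditions are also sufficient, and the proposed sequences constitute an equilibrium.

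I expect the only delicate points to be the $b=0$ edge case in the boundary analysis (where ``$f(0)<0$'' must be read as a limit to $-\infty$) and the standard-but-worth-stating argument that the Euler conditions are sufficient, not merely necessary, for the agents' infinite-horizon problems; neither is deep, but both should be addressed for a complete argument. Everything else is a routine intermediate-value-plus-monotonicity computation.
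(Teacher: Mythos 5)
Your proposal is correct and follows essentially the same route as the paper: a monotone one-dimensional function of $P$ (yours is the negative of the paper's $g$), the intermediate value theorem with the Inada condition at $P=a$, the observation that \eqref{eq:euler_poor} follows from \eqref{eq:euler_rich} and $\beta<1$, and verification of the transversality condition $\lim_{t\to\infty}\beta^t u'(c_t)P=0$ using boundedness of $P$ and of the alternating consumptions. The only difference is your explicit handling of the $b=0$ edge case, which the paper subsumes by reading $g(0)=\beta u'(0)-u'(a)=\infty>0$.
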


There are many results based on \citet{Bewley1980}'s model in the literature, including \citet{ScheinkmanWeiss1986}, \citet{Woodford1990}, \citet[Example 1]{Kocherlakota1992}, \citet[Example 7.1]{HuangWerner2000}, and \citet[Example 1]{Werner2014}, which are all pure bubble models without dividends. \citet{LeVanPham2016} consider a model with both physical capital and a dividend-paying asset, and they provide an example of bubbles attached to the dividend-paying asset in \S6.1.2 in a fairly limited setting (the production function is linear with respect to capital and labor). See also \citet*[\S4.2]{BosiLeVanPham2017MSS} and \citet*[\S5]{BosiLeVanPham2018} for related results. \citet*[\S4.1]{BosiLeVanPham2022} extend \citet{Bewley1980}'s model with general endowments. Their Proposition 7 and the subsequent discussion construct bubbly equilibria with a dividend-paying asset. 

Because the example of \citet*{BosiLeVanPham2022} is rather involved, here we present a simple example based on an earlier version of \citet*{HiranoJinnaiTodaLeverage},\footnote{Specifically, See \S2.2.2 of \url{https://arxiv.org/abs/2211.13100v4}.} which is an extension of Example 1 of \citet{Kocherlakota1992}. Let there be two agents with utility function \eqref{eq:utility}, where the period utility takes the constant relative risk aversion (CRRA) form
\begin{equation*}
    u(c)=\begin{cases*}
        \frac{c^{1-\gamma}}{1-\gamma} & if $0<\gamma\neq 1$,\\
        \log c & if $\gamma=1$.
    \end{cases*}
\end{equation*}
There is a unit supply of a long-lived asset that pays a constant dividend $D>0$ in every period. The aggregate endowment at time $t$ (including dividend) is $(a+b)G^t$, where $G>1$ and $a>b>0$. The asset is initially owned by agent 2. Suppose the asset cannot be shorted.

We specify individual endowments such that agent 1 is rich (poor) in even (odd) periods, and vice versa for agent 2. Conjecture that in equilibrium, individual consumption is
\begin{equation*}
(c_{1t},c_{2t})=\begin{cases*}
    ((a-p)G^t,(b+p)G^t) & if $t$: even,\\
    ((b+p)G^t,(a-p)G^t) & if $t$: odd
    \end{cases*}
\end{equation*}
for some $0\le p<a$. Conjecture that the asset price at time $t$ is
\begin{equation}
    P_t=\frac{D}{G-1}+pG^t, \label{eq:Pt_bewley}
\end{equation}
where we conjecture that the gross risk-free rate is $R=G$, $\frac{D}{G-1}=\sum_{t=1}^\infty R^{-t}D$ is the fundamental value of the asset, and $pG^t$ is the bubble component. Conjecture that every period, the poor (rich) agent sells (buys) the entire asset to smooth consumption. Letting $e_t^r$ ($e_t^p$) be the time $t$ endowment of the rich (poor) agent, the budget constraints imply
\begin{align*}
    &\text{Rich:} && (a-p)G^t+P_t\cdot 1 = (P_t+D)\cdot 0 + e_t^r \iff e_t^r=aG^t+\frac{1}{G-1}D,\\
    &\text{Poor:} && (b+p)G^t+P_t\cdot 0 = (P_t+D)\cdot 1 + e_t^p \iff e_t^p=bG^t-\frac{G}{G-1}D.
\end{align*}
Let $D>0$ be small enough such that $e_0^p=b-\frac{G}{G-1}D>0$, which implies $e_t^p>0$ for all $t$ because $G>1$. Since the rich agent is unconstrained, the Euler equation must hold with equality. For the poor agent, the Euler equation may be an inequality. Since by assumption we have $R=G$, the Euler equations become
\begin{align*}
    &\text{Rich:} && \beta G \left(\frac{b+p}{a-p}G\right)^{-\gamma}=1,\\
    &\text{Poor:} && \beta G \left(\frac{a-p}{b+p}G\right)^{-\gamma}\le 1.
\end{align*}
Solving the Euler equation of the rich, we obtain
\begin{equation}
    p=\frac{a(\beta G^{1-\gamma})^{1/\gamma}-b}{1+(\beta G^{1-\gamma})^{1/\gamma}}. \label{eq:p}
\end{equation}
For $p>0$, it is necessary and sufficient that $\beta G^{1-\gamma}>(b/a)^\gamma$. For the Euler inequality for the poor agent to hold, it is necessary and sufficient that
\begin{equation}
    1\ge \left(\frac{b+p}{a-p}\right)^\gamma\beta G^{1-\gamma}=(\beta G^{1-\gamma})^2\iff \beta G^{1-\gamma}\le 1. \label{eq:Euler_poor}
\end{equation}
To show that we have an equilibrium, it suffices to show the transversality condition for optimality $\lim_{t\to\infty}\beta^t u'(c_t)P_t=0$  \citep[p.~237, Example 15.3]{TodaEME}, where $c_t$ is the consumption of any agent. Since $c_t\sim G^t$ and $P_t\sim G^t$ as $t\to\infty$, we obtain $\beta^t u'(c_t)P_t\sim (\beta G^{1-\gamma})^t\to 0$ if and only if $\beta G^{1-\gamma}<1$, in which case the Euler inequality for the poor \eqref{eq:Euler_poor} holds. Therefore we obtain the following proposition.

\begin{prop}\label{prop:exmp_infhor}
Let $\beta\in (0,1)$ and $\gamma>0$ be given. Take any $G>1$ such that $\beta G^{1-\gamma}<1$. Take any $a,b,D>0$ such that
\begin{equation*}
    \frac{G}{G-1}D<b<(\beta G^{1-\gamma})^{1/\gamma}a
\end{equation*}
holds and define $p>0$ by \eqref{eq:p}. Then the consumption allocation $(c_t^r,c_t^p)=((a-p)G^t,(b+p)G^t)$ and asset price $P_t=\frac{D}{G-1}+pG^t$ constitute a bubbly equilibrium.
\end{prop}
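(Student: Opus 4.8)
The plan is to verify directly that the conjectured allocation and price path satisfy every equilibrium condition --- individual optimality, commodity-market clearing, and asset-market clearing --- and then that the bubble component is strictly positive. Since the algebra is essentially carried out in the discussion preceding the statement, the proof is mostly a matter of assembling the pieces in the right order. First I would record the elementary facts about $p$. Writing $\theta\coloneqq(\beta G^{1-\gamma})^{1/\gamma}$, the hypothesis $\beta G^{1-\gamma}<1$ gives $\theta\in(0,1)$, and from the closed form \eqref{eq:p} one computes $a-p=(a+b)/(1+\theta)$ and $b+p=\theta(a+b)/(1+\theta)$, so that $0<b+p<a-p$ and in particular $0<p<a$; the strict positivity $p>0$ is precisely the inequality $b<\theta a$ assumed in the statement. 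These identities also yield the clean ratio $(b+p)/(a-p)=\theta$, which I will reuse.

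Next I would check feasibility. Imposing the budget constraints on the conjectured consumptions and on the price \eqref{eq:Pt_bewley}, with the poor agent selling and the rich agent buying the single unit of the asset, forces the individual endowments $e_t^r=aG^t+D/(G-1)$ and $e_t^p=bG^t-GD/(G-1)$. I would verify $e_t^p>0$ for all $t$: the sequence $bG^t$ is increasing because $G>1$, so $e_t^p\ge e_0^p=b-GD/(G-1)>0$ by the assumption $GD/(G-1)<b$; and $e_t^r+e_t^p=(a+b)G^t-D$ matches the stated aggregate endowment net of the dividend. Commodity-market clearing $c_t^r+c_t^p=(a+b)G^t$, asset-market clearing ($1+0$ equals the unit supply), and the no-arbitrage identity $G^{-t}P_t=G^{-(t+1)}(P_{t+1}+D)$ (so the gross risk-free rate is $R=G$) are then direct computations.

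The substantive step is optimality. For the rich (unconstrained) agent the Euler equation reads $\beta G\bigl((b+p)G/(a-p)\bigr)^{-\gamma}=\beta G^{1-\gamma}\theta^{-\gamma}=1$, which holds by the definition of $\theta$; for the poor (constrained) agent only the Euler inequality $\beta G\bigl((a-p)G/(b+p)\bigr)^{-\gamma}=(\beta G^{1-\gamma})^2\le 1$ is required, and it holds since $\beta G^{1-\gamma}<1$. To turn these first-order conditions into genuine optimality I would invoke the standard sufficiency result for concave intertemporal problems with a no-short-sales constraint (as in \citet[Example 15.3]{TodaEME}): the Euler (in)equalities together with the transversality condition $\lim_{t\to\infty}\beta^tu'(c_t)P_t=0$ imply optimality. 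Since $c_t\sim G^t$ and $P_t\sim G^t$, we have $\beta^tu'(c_t)P_t\sim(\beta G^{1-\gamma})^t\to 0$, so transversality holds. I expect this to be the only delicate point: it is the sole place that relies on an external optimality theorem and on a limiting argument rather than finite algebra, and one must apply it to both agents, checking in the constrained case that going long the asset is ruled out by the Euler inequality together with concavity.

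Finally, for the bubble claim I would compute the fundamental value $V_t=q_t^{-1}\sum_{s=t+1}^\infty q_sD=\sum_{k=1}^\infty G^{-k}D=D/(G-1)$, whence $B_t=P_t-V_t=pG^t>0$; alternatively, $D/P_t\sim G^{-t}/p$ is summable, so Lemma~\ref{lem:bubble} yields the bubble at once. Collecting these observations establishes that the stated allocation and price constitute a bubbly equilibrium.
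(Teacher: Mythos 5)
Your proposal is correct and follows essentially the same route as the paper, whose ``proof'' is the derivation in the text preceding the proposition: conjecture the allocation and price $P_t=\frac{D}{G-1}+pG^t$, back out the endowments, verify the Euler equation for the rich and the Euler inequality for the constrained poor, confirm the transversality condition $\beta^tu'(c_t)P_t\sim(\beta G^{1-\gamma})^t\to 0$, and identify $pG^t$ as the bubble. Your only additions are the explicit closed forms $a-p=(a+b)/(1+\theta)$ and $b+p=\theta(a+b)/(1+\theta)$, which tidy the verification but do not change the argument.
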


\subsection{Generality and economic relevance}\label{subsec:generality}

So far, we have seen several example economies with bubbles attached to real assets. However, these examples are shown in fairly limited settings. Therefore, it is not obvious to what extent there is generality and how economically relevant the results are, nor is it obvious what new insights and asset pricing implications can be drawn when we consider more general macro-finance models. From an economic perspective, it would be fair to say that these questions are far more important than just proving the existence of a bubble in one setting or another.\footnote{Another reaction from the general audience when one of the authors (Hirano) presented earlier papers on pure bubbles was that researchers working on bubbles are preoccupied with showing that bubbles can or cannot occur in certain limited settings just from theoretical curiosity, rather than think about the generality of results and the economic implications. As we discuss below, this common view of the literature is totally wrong.}

Our series of papers \citep*{HiranoToda2025JPE,HiranoToda2025PNAS,HiranoTodaHousingbubble,HiranoToda2024JME,HiranoJinnaiTodaLeverage} address these questions head-on and argue that the theory of asset price bubbles attached to real assets is closely related to the root of economic development. \citet[\S III.B]{HiranoToda2025JPE} consider a two-sector production economy with land and uneven productivity growth and show that land bubbles necessarily emerge if the productivity growth is faster in the non-land sector. \citet{HiranoToda2025PNAS} significantly extend this result under aggregate uncertainty and show that land prices exhibit recurrent stochastic fluctuations along economic development, with expansions and contractions in the size of land bubbles. \citet[\S III.C]{HiranoToda2025JPE} consider a production economy with capital and labor and show the necessity of stock price (capital) bubble under some condition on the elasticity of substitution and productivity growth. \citet*{HiranoJinnaiTodaLeverage} consider an infinite-horizon macro-finance model and show that once financial leverage or overall productivity of the economy gets sufficiently high, the dynamic path dramatically changes and deviates from the balanced growth path, necessarily leading to land price bubbles. \citet[\S6]{HiranoToda2024JME} study a special case with a closed-form solution with linear production.

As can be seen from these results, once we consider asset price bubbles attached to real assets in more general macro-finance models, we can derive new insights. One is the concept of the \emph{necessity} of bubbles, and the other is the importance of \emph{unbalanced growth}. The concept of the necessity of bubbles is fundamentally different from the concept of the possibility of bubbles as in pure bubble models, \ie, bubbles \emph{can} arise under some conditions. Bubble necessity means that there exist neither fundamental equilibria nor bubbly equilibria that become asymptotically bubbleless, and all equilibria \emph{must} be asymptotically bubbly. \citet{HiranoToda2025JPE} prove that the necessity of bubbles can be widely obtained in workhorse macroeconomic models, including Bewley models with idiosyncratic investment shocks (their \S V.B) and preference shocks (their \S V.C). Unbalanced growth means that different factors of production or different sectors have different productivity growth rates. Hence, unbalanced growth entails a world of nonstationarity. It is well known that the conventional macroeconomic theory with balanced growth requires knife-edge restrictions implied by the Uzawa balanced growth theorem \citep{Uzawa1961,Schlicht2006,GrossmanHelpmanOberfieldSampson2017}. Once we remove these restrictions and consider the global parameter space from the outset, rather than focusing on the knife-edge case, the implications for asset pricing dramatically change.

In the rest of this chapter, we address the concept of the necessity of bubbles established in \citet{HiranoToda2025JPE}. Although the results are based on our earlier work, we explain the concept in slightly different models.

\section{Necessity of bubbles}\label{sec:necessity}

We consider the standard two-period overlapping generations model. Let $U(y,z)$ denote the utility function of a typical agent, where $(y,z)$ denote the consumption when young and old. We assume that $U$ is quasi-concave, differentiable with positive partial derivatives, and satisfies the Inada condition. The endowments of the young are old at time $t$ are denoted by $(a_t,b_t)$, where $a_t>0$ and $b_t\ge 0$. There is a dividend-paying asset with infinite maturity in unit supply, which is initially owned by the old. Let $D_t\ge 0$ be the dividend at time $t$, with $D_t>0$ infinitely often to guarantee that the asset price is always strictly positive.

Letting $P_t>0$ be the asset price (in units of the date-$t$ good) and $x_t$ the number of asset shares demanded by the young, the budget constraints are
\begin{subequations}\label{eq:budget}
\begin{align}
    &\text{Young:} & y_t+P_tx_t&=a_t, \label{eq:budget_young}\\
    &\text{Old:} & z_{t+1}&=b_{t+1}+(P_{t+1}+D_{t+1})x_t. \label{eq:budget_old}
\end{align}
\end{subequations}
Solving for $(y_t,z_{t+1})$, the utility maximization problem of generation $t$ is
\begin{equation}
    \max_x U(a_t-P_tx,b_{t+1}+(P_{t+1}+D_{t+1})x), \label{eq:UMP}
\end{equation}
where $x\le a_t/P_t$ to prevent negative consumption.

A rational expectations equilibrium is defined by a sequence of prices and allocations $\set{(P_t,x_t,y_t,z_t)}_{t=0}^\infty$ such that all agents optimize and the commodity and asset markets clear. Regarding the asset market, because the old exit the economy and hence liquidate their asset holdings, the young are the natural buyer. Therefore the asset market clearing condition is $x_t=1$.

Let
\begin{equation}
    (y_t,z_{t+1})=(a_t-P_t,b_{t+1}+P_{t+1}+D_{t+1}) \label{eq:yz}
\end{equation}
be the consumption of generation $t$ obtained by the budget constraint \eqref{eq:budget} and imposing the asset market clearing condition $x_t=1$. The first-order condition of the utility maximization problem \eqref{eq:UMP} (Euler equation) evaluated at the equilibrium allocation $x_t=1$ is
\begin{equation}
    U_y(y_t,z_{t+1})P_t=U_z(y_t,z_{t+1})(P_{t+1}+D_{t+1}), \label{eq:foc}
\end{equation}
where $(y_t,z_{t+1})$ is as in \eqref{eq:yz}. A standard truncation argument \citep{BalaskoShell1980} implies the existence of equilibrium. Furthermore, because $D_t>0$ infinitely often, we necessarily have $P_t>0$: see the proof of Theorem 1 of \citet{HiranoToda2025JPE}. Another useful property is that given $P_{t+1}>0$, there exists a unique $P_t>0$ satisfying \eqref{eq:foc}. We state this result as a lemma.

\begin{lem}\label{lem:backward}
For any $P_{t+1}>0$, there exists a unique $P_t\in (0,a_t)$ satisfying \eqref{eq:foc}.
\end{lem}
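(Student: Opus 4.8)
The plan is to freeze $P_{t+1}>0$, note that this also freezes $c\coloneqq P_{t+1}+D_{t+1}>0$ and $z_{t+1}\coloneqq b_{t+1}+P_{t+1}+D_{t+1}>0$ as constants that do \emph{not} move with $P_t$, and turn \eqref{eq:foc} into a one-dimensional root problem. Using $y_t=a_t-P_t$ from \eqref{eq:yz}, a price $P_t\in(0,a_t)$ solves \eqref{eq:foc} exactly when it is a zero of
\[
 f(P)\coloneqq P\,U_y(a_t-P,z_{t+1})-c\,U_z(a_t-P,z_{t+1})
\]
on $(0,a_t)$, so the lemma amounts to showing that $f$ has a unique zero there.

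For existence I would invoke the intermediate value theorem. The map $f$ is continuous on $[0,a_t)$ because $U$ is differentiable; at the left end $f(0)=-c\,U_z(a_t,z_{t+1})<0$ since $c>0$ and $U_z>0$; and as $P\uparrow a_t$ the young's consumption $a_t-P\downarrow 0$, so the Inada condition drives $U_y(a_t-P,z_{t+1})\to\infty$ while $U_z(a_t-P,z_{t+1})$ stays bounded (the Inada blow-up of $U_z$ would only occur as old-age consumption vanishes, which is not the case here), giving $f(P)\to+\infty$. Hence $f$ changes sign on $(0,a_t)$ and has a zero $P_t$ there; since $f<0$ near $0$ this zero automatically lies in $(0,a_t)$, which simultaneously delivers the feasibility requirement $y_t>0$.

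For uniqueness I would divide out the positive factor $U_z(a_t-P,z_{t+1})$ and instead study the zeros of $\tilde f(P)\coloneqq P\,m(a_t-P)-c$, where $m(y)\coloneqq U_y(y,z_{t+1})/U_z(y,z_{t+1})>0$ is the marginal rate of substitution read off along the \emph{fixed} horizontal slice $z=z_{t+1}$. This is where the structure of the OLG budget set pays off: because $x_t=1$ is imposed, old-age consumption stays put at $z_{t+1}$ as $P_t$ varies, so only the first argument of $U$ moves. If $m(\cdot)$ is non-increasing in $y$ (diminishing MRS along that slice), then $P\mapsto m(a_t-P)$ is non-decreasing and positive, hence $P\mapsto P\,m(a_t-P)$ is strictly increasing, $\tilde f$ is strictly increasing, and its zero is unique. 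Equivalently, one can phrase the whole argument in terms of the young's asset demand being a strictly decreasing function of the price $P_t$ (and crossing from above $1$ near $P_t=0$ to below $1$ near $P_t=a_t$, again by Inada), so that it meets the unit supply at exactly one price.

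The hard part is precisely this monotonicity: quasi-concavity of $U$ alone does not force $m(y)=U_y(y,z_{t+1})/U_z(y,z_{t+1})$ to be monotone in $y$ (a large enough cross-partial $U_{yz}$ could in principle reverse it), so one must lean on the standard curvature/normality content carried by the maintained assumptions on $U$ — for instance concavity in own consumption together with a sign restriction on $U_{yz}$, or normality of old-age consumption — to make the young's demand strictly decreasing in $P_t$. Everything else (continuity, the two boundary signs, the Inada blow-up, and the inclusion $P_t\in(0,a_t)$) is routine once the problem has been compressed to the single scalar equation $f(P)=0$.
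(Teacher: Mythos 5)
Your reduction to a scalar root-finding problem is exactly the paper's: the appendix proof rewrites \eqref{eq:foc} as $g(P)\coloneqq \frac{U_z}{U_y}(a-P,b'+P'+D')(P'+D')-P=0$, which is your $f$ up to the positive factor $-U_y$, and the existence half (sign at $P=0$, Inada blow-up of $U_y$ as $P\uparrow a_t$, intermediate value theorem) is the same. The one genuine gap is the uniqueness step, which you correctly isolate but then leave open: you observe that quasi-concavity alone does not force $y\mapsto U_y(y,z_{t+1})/U_z(y,z_{t+1})$ to be monotone along the fixed horizontal slice $z=z_{t+1}$, and you stop at ``one must lean on the maintained assumptions'' without identifying which assumption does the work. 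The answer is homotheticity: the proof is carried out under Assumption \ref{asmp:U}, where $U$ is homogeneous of degree 1, so $U_y$ and $U_z$ are homogeneous of degree 0 and the ratio $U_y/U_z$ depends on $(y,z)$ only through $y/z$; quasi-concavity then makes this ratio (weakly) decreasing in $y/z$, hence $m(y)=U_y(y,z_{t+1})/U_z(y,z_{t+1})$ is non-increasing in $y$, which is precisely the hypothesis your uniqueness argument needs. Note also that only \emph{weak} monotonicity of $m$ is required: in your $\tilde f(P)=P\,m(a_t-P)-c$, as in the paper's $g$, the explicit linear term already supplies strictness. Without homotheticity --- i.e., for the lemma read under the weaker assumptions of \S\ref{sec:necessity} alone --- your worry is substantive: a sufficiently negative cross-partial $U_{yz}$ can make $U_y/U_z$ increase in $y$ while preserving quasi-concavity, and then $\tilde f$ need not be monotone.

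A minor further point: your assertion that $U_z(a_t-P,z_{t+1})$ ``stays bounded'' as $P\uparrow a_t$ is not literally guaranteed by the stated assumptions ($U_z(y,0)=\infty$ says nothing about $U_z(0,z)$). What the existence step actually needs is that the \emph{ratio} $U_z/U_y\to 0$ as $y\downarrow 0$, which is how the paper phrases it ($g(a)=0-a<0$) and which again follows from the homothetic structure together with the Inada condition on $U_y$.
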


Lemma \ref{lem:backward} allows us to extend an equilibrium backwards in time uniquely, thereby allowing us to focus on the long run behavior of the equilibrium.

In any equilibrium, we may define the gross risk-free rate between time $t$ and $t+1$ by
\begin{equation}
    R_t\coloneqq \frac{P_{t+1}+D_{t+1}}{P_t}=\frac{U_y}{U_z}(y_t,z_{t+1}). \label{eq:Rt}
\end{equation}
Define the Arrow-Debreu price (data-0 price of the consumption good delivered at time $t$) by $q_0=1$ and $q_t=1/\prod_{s=0}^{t-1}R_s$ for all $t>0$. By the discussion in \S\ref{sec:bubble}, the fundamental value of the asset is then $V_0=\sum_{t=1}^\infty q_tD_t$.

We now state a result implying the necessity of bubbles. Suppose for simplicity that the endowments are stationary, so $(a_t,b_t)=(a,b)$ for all $t$. Define the long-run dividend growth rate by
\begin{equation}
    G_d\coloneqq \limsup_{t\to\infty} D_t^{1/t} \label{eq:Gd}
\end{equation}
and the quantity
\begin{equation}
    R\coloneqq \frac{U_y}{U_z}(a,b). \label{eq:R_aut}
\end{equation}
Using \eqref{eq:Rt}, note that $R$ in \eqref{eq:R_aut} is the gross risk-free rate that would prevail in a counterfactual economy without the asset.

\begin{thm}\label{thm:necessity_OLG}
If $R<G_d<1$, then all equilibria are bubbly with $\liminf_{t\to\infty}P_t>0$.
\end{thm}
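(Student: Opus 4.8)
The plan is to use the Bubble Characterization Lemma (Lemma~\ref{lem:bubble}) to reduce the theorem to the single assertion $\liminf_{t\to\infty}P_t>0$, and then to prove the latter by contradiction.

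\emph{Preliminaries and reduction.} First I would record that $P_t\in(0,a)$ for every $t$: the young consume $y_t=a-P_t$, which is strictly positive by the Inada condition, so $P_t<a$; thus prices are bounded. Since $\limsup_{t\to\infty}D_t^{1/t}=G_d<1$, for any $\rho\in(G_d,1)$ we have $D_t<\rho^t$ for all large $t$, so $D_t\to0$, $\bar D\coloneqq\sup_tD_t<\infty$, and $\sum_tD_t<\infty$. Consequently, if $\liminf_tP_t>0$ then $P_t$ is eventually bounded below by a positive constant, whence $\sum_tD_t/P_t<\infty$, and Lemma~\ref{lem:bubble} (which applies since there is no aggregate uncertainty and $P_t>0$ for all $t$) yields a bubble; moreover $\liminf_tP_t>0$ is exactly the second half of the theorem. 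So it suffices to prove $\liminf_tP_t>0$ in every equilibrium.

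\emph{Step 1 (the main obstacle): $\liminf_tP_t=0$ implies $P_t\to0$.} The point is that the price cannot merely dip towards zero along a subsequence while oscillating, and I would rule this out using the overlapping generations structure. Two facts. First, whenever $P_t\le a/2$ the allocation $(y_t,z_{t+1})=(a-P_t,b+P_{t+1}+D_{t+1})$ lies in the fixed compact set $[a/2,a]\times[b,a+b+\bar D]$, on which $U_y/U_z$ is continuous and positive, so the gross rate $R_t$ from \eqref{eq:Rt} satisfies $R_t\le C$ for a constant $C\ge1$. Second, since $(U_y/U_z)(a,b)=R<1$, continuity of the marginal rate of substitution gives a neighborhood of $(a,b)$ on which $U_y/U_z<1$; hence whenever $P_t$ is small and $P_{t+1}$ is small (so $z_{t+1}=b+P_{t+1}+D_{t+1}$ is close to $b$, using that $D_{t+1}$ is small for large $t$) one has $R_t<1$, so $P_{t+1}=R_tP_t-D_{t+1}<P_t$. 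Combining: there is a threshold $\epsilon_*>0$ such that, for large $t$, $P_t\le\epsilon_*$ forces $P_{t+1}\le P_t$ --- a rise $P_{t+1}>P_t$ would require $R_t>1$, hence $z_{t+1}$ past a fixed level, hence $P_{t+1}$ bounded below by a fixed positive number, which is impossible since $P_{t+1}\le R_tP_t\le CP_t$ is then small. Therefore, once $P_t$ falls below $\epsilon_*$ it is non-increasing forever; since $\liminf_tP_t=0<\epsilon_*$ this eventually happens, so $P_t$ converges, necessarily to $0$. (This is where the backward-uniqueness Lemma~\ref{lem:backward} and the monotone comparative statics near autarky are doing the real work; the general case with nonstationary endowments is the content of the proof of Theorem~1 in \citet{HiranoToda2025JPE}.)

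\emph{Step 2: the contradiction.} Suppose $\liminf_tP_t=0$; by Step~1, $P_t\to0$, so $(y_t,z_{t+1})\to(a,b)$ and $R_t\to R$ by continuity. Fix $\rho$ with $R<\rho<G_d$ (possible since $R<G_d$), so $R_t<\rho$ for all $t\ge T$. Telescoping $R_tP_t=P_{t+1}+D_{t+1}$ exactly as in \S\ref{subsec:definition} gives, for $T'>t\ge T$,
\begin{equation*}
P_t=\sum_{s=t+1}^{T'}\frac{D_s}{\prod_{j=t}^{s-1}R_j}+\frac{P_{T'}}{\prod_{j=t}^{T'-1}R_j}\ge\sum_{s=t+1}^{T'}D_s\,\rho^{-(s-t)},
\end{equation*}
and letting $T'\to\infty$, $P_t\ge\rho^t\sum_{s>t}D_s\rho^{-s}$. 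But $D_s\rho^{-s}=(D_s^{1/s}/\rho)^s$ with $\limsup_sD_s^{1/s}=G_d>\rho$, so $D_s\rho^{-s}\to\infty$ along a subsequence and $\sum_sD_s\rho^{-s}=\infty$; hence $P_t=\infty$, contradicting $P_t<a$. Therefore $\liminf_tP_t>0$, which by the reduction completes the proof. Note the division of labor between the hypotheses: $G_d<1$ is what drives $D_t\to0$ (so allocations approach autarky) and $\sum_tD_t/P_t<\infty$ on the good event, while $R<G_d$ is what makes the telescoped present value in Step~2 diverge.
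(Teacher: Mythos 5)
Your proof is correct, but it is worth noting that the paper itself does not prove Theorem \ref{thm:necessity_OLG} here: it defers to Theorem 2 of \citet{HiranoToda2025JPE} and offers only a heuristic, explicitly flagging as the missing step the claim that in a fundamental equilibrium ``the asset price $P_t$ grows at the same rate of $G_d<1$,'' and remarking that the actual proof avoids relying on convergence of $P_t$ altogether. Your argument takes the opposite tack and supplies exactly the missing piece for the stationary special case: Step 1 shows that $\liminf_t P_t=0$ forces $P_t$ to become eventually monotone and hence converge to $0$ (via the trapping argument that, near autarky, $(U_y/U_z)<1$ so a price uptick would require $z_{t+1}$ bounded away from $b$, contradicting $P_{t+1}+D_{t+1}=R_tP_t\le C\epsilon_*$), after which the present-value divergence in Step 2 (using $R<\rho<G_d$ and $\limsup_s D_s^{1/s}=G_d$) yields $P_t=\infty$, a contradiction; the reduction via Lemma \ref{lem:bubble} and $\sum_t D_t<\infty$ then gives bubbliness. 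This is a legitimately different, more elementary and self-contained route that buys a complete proof in the stationary-endowment case at the cost of generality (the cited proof is designed to work without establishing convergence of $P_t$, which matters with nonstationary endowments). Two small points to tighten: when $b=0$ the continuity of $U_y/U_z$ at $(a,b)$ and the uniform bound $R_t\le C$ on the compact set need a word (the Inada condition $U_z(y,0)=\infty$ makes $R=0$ and keeps $U_y/U_z$ bounded, so nothing breaks); and your invocation of Lemma \ref{lem:backward} in Step 1 is decorative --- the argument as written never uses backward uniqueness.
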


Theorem \ref{thm:necessity_OLG} is a special case of \citet[Theorem 2]{HiranoToda2025JPE} with $(a_t,b_t)=(a,b)$ and $G=1$, so we omit the proof (which is technical). Here we explain the intuition. Because dividends grow at rate $G_d$ in \eqref{eq:Gd}, if a fundamental equilibrium exists, the asset price $P_t$ grows at the same rate of $G_d<1$ and hence converges to 0. By \eqref{eq:yz}, the equilibrium allocation $(y_t,z_{t+1})$ converges to $(a,b)$. Then the interest rate $R_t$ in \eqref{eq:Rt} converges to the counterfactual autarky interest rate $R$ in \eqref{eq:R_aut}. But since by assumption $R<G_d$, the fundamental value of the asset (the present value of dividends) becomes infinite, which is impossible. Therefore fundamental equilibria do not exist.

Of course, this argument is heuristic because the part ``the asset price $P_t$ grows at the same rate of $G_d<1$'' is not obvious. The actual proof of Theorem 2 of \citet{HiranoToda2025JPE} avoids this issue by showing that all equilibria satisfy the properties stated in Theorem \ref{thm:necessity_OLG} without relying on the convergence of $P_t$.

\section{Long-run behavior of asset prices}\label{sec:longrun}

As noted in \S\ref{subsec:generality}, bubbles attached to real assets entail a nonstationary world with unbalanced growth. Dealing with this world requires analytical tools. Since the asset price is a forward-looking variable and economic agents are rational, as long as bubbles are expected to arise in the future, by a backward induction argument, bubbles will arise at present. Thus whether bubbles emerge in the future depends on the long-run behavior of the model. In this section, we explain how to study such models quantitatively by applying the local stable manifold theorem.

\subsection{Model}

The model is a special case of the OLG model in \S\ref{sec:necessity}. In addition, we assume that the utility function $U$ is increasing, quasi-concave, and homothetic. Without loss of generality, assume $U$ is homogeneous of degree 1. Then Theorem 11.14 of \citet[p.~158]{TodaEME} implies that $U$ is actually concave. Because we shall use calculus, we impose the following regularity conditions.

\begin{asmp}\label{asmp:U}
    The utility function $U:\R_{++}^2\to (0,\infty)$ is homogeneous of degree 1, twice continuously differentiable, and satisfies $U_y>0$, $U_z>0$, $U_{yy}<0$, $U_{zz}<0$, $U_y(0,z)=\infty$, $U_z(y,0)=\infty$.
\end{asmp}

Furthermore, we specialize the endowments and dividends as follows.

\begin{asmp}\label{asmp:G}
    The date-$t$ endowments of the young and old are denoted by $(a_t,b_t)=(aG^t,bG^t)$, where $G>0$ is the economic growth rate and $a>0,b\ge 0$. The date-$t$ dividend is denoted by $D_t=DG_d^t$, where $G_d\in (0,G)$ is the dividend growth rate and $D>0$.
\end{asmp}

The condition $D>0$ implies that the asset pays dividends, unlike pure bubble models in the literature. The condition $G_d<G$ is important for generating asset price bubbles. (In \S\ref{sec:necessity}, we had $G=1$.) Below, we focus on equilibria in which the asset price grows at an asymptotically constant rate.

\subsection{Fundamental equilibria}

Suppose first that the asset price reflects fundamentals, so $P_t=V_t$.

\paragraph{Derivation of autonomous system}

Since by Assumption \ref{asmp:G} dividends grow at rate $G_d$, we may conjecture that so does $P_t=V_t$. This motivates us to define the detrended asset price $p_t\coloneqq G_d^{-t}P_t$. Dividing the first-order condition \eqref{eq:foc} by $G_d^t$, we obtain
\begin{equation}
    U_yp_t=G_dU_z(p_{t+1}+D), \label{eq:foc_f1}
\end{equation}
where $U_y,U_z$ are evaluated at
\begin{equation}
    (y,z)=(aG^t-p_tG_d^t,bG^{t+1}+(p_{t+1}+D)G_d^{t+1}). \label{eq:yz_f1}
\end{equation}
By Assumption \ref{asmp:U}, $U$ is homogeneous of degree 1 and hence $U_y,U_z$ are homogeneous of degree 0. Therefore by dividing \eqref{eq:yz_f1} by $G^t$, \eqref{eq:foc_f1} remains valid by evaluating at
\begin{equation}
    (y,z)=(a-p_t(G_d/G)^t,Gb+G_d(p_{t+1}+D)(G_d/G)^t). \label{eq:yz_f2}
\end{equation}
The nonlinear difference equation \eqref{eq:foc_f1} explicitly depends on time because $(G_d/G)^t$ enters in the arguments. Thus, the system is non-autonomous, which is inconvenient for analysis. To remove the explicit dependence on time, we introduce the auxiliary variable $\xi_t=(\xi_{1t},\xi_{2t})\in \R_{++}^2$ defined by $\xi_{1t}=p_t=P_t/G_d^t$ and $\xi_{2t}=(G_d/G)^t$. Then we can write the system as $\Phi(\xi_t,\xi_{t+1})=0$, where $\Phi:\R^4\to \R^2$ is given by
\begin{subequations}\label{eq:Phi_f}
    \begin{align}
        \Phi_1(\xi,\eta)&=G_d(\eta_1+D)U_z-\xi_1U_y, \label{eq:Phi1_f}\\
        \Phi_2(\xi,\eta)&=\eta_2-(G_d/G)\xi_2, \label{eq:Phi2_f}
    \end{align}
\end{subequations}
where $(\xi,\eta)=(\xi_1,\xi_2,\eta_1,\eta_2)$ and $\Phi=(\Phi_1,\Phi_2)$. In \eqref{eq:Phi_f}, using \eqref{eq:yz_f2} and the definition of $\xi_t$, the partial derivatives $U_y,U_z$ are evaluated at
\begin{equation*}
    (y,z)=(a-\xi_1\xi_2,Gb+G_d(\eta_1+D)\xi_2).
\end{equation*}

\paragraph{Steady state}

Let $\xi^*$ be a steady state of the system $\Phi(\xi_t,\xi_{t+1})=0$ defined by $\Phi(\xi^*,\xi^*)=0$. Noting that $G_d\in (0,G)$, \eqref{eq:Phi2_f} implies $\xi_2^*=0$. Then \eqref{eq:Phi1_f} implies
\begin{equation}
    G_d(\xi_1^*+D)U_z-\xi_1^*U_y=0\iff \xi_1^*=\frac{G_dDU_z}{U_y-G_dU_z}, \label{eq:xi_1f}
\end{equation}
where $U_y,U_z$ are evaluated at $(y,z)=(a,Gb)$. Because $\xi_{1t}=p_t$ is a normalized price, it must be positive. Therefore a necessary and sufficient condition for the existence of a steady state is
\begin{equation}
    U_y-G_dU_z>0\iff G_d<\frac{U_y}{U_z}(a,Gb). \label{eq:cond_f}
\end{equation}
The economic intuition for the existence condition \eqref{eq:cond_f} is the following. If the asset price reflects fundamentals, because dividends grow at rate $G_d$, so does the asset price. Because endowments grow at a higher rate $G>G_d$, the asset price becomes negligible in the long run, and the consumption allocation approaches autarky. Using \eqref{eq:Rt}, the long run interest rate converges to the right-hand side of \eqref{eq:cond_f}. In equilibrium, this interest rate must exceed the dividend growth rate, for otherwise the fundamental value is infinite, which is impossible.

\paragraph{Asymptotic behavior}

To study the asymptotic behavior of the solution to the nonlinear implicit difference equation $\Phi(\xi_t,\xi_{t+1})=0$, we apply the implicit function theorem and the local stable manifold theorem. We first solve the nonlinear equation $\Phi(\xi,\eta)=0$ as $\eta=\phi(\xi)$ near the steady state $(\xi,\eta)=(\xi^*,\xi^*)$ applying the implicit function theorem. Differentiating \eqref{eq:Phi_f} with respect to $\xi$, we obtain
\begin{align*}
    \frac{\partial \Phi_1}{\partial \xi_1}&=G_d(\eta_1+D)(-\xi_2 U_{yz})-U_y+\xi_1\xi_2U_{yy},\\
    \frac{\partial \Phi_1}{\partial \xi_2}&=G_d(\eta_1+D)(-\xi_1 U_{yz}+G_d(\eta_1+D)U_{zz})-\xi_1(-\xi_1U_{yy}+G_d(\eta_1+D)U_{yz}),\\
    \frac{\partial \Phi_2}{\partial \xi_1}&=0,\\
    \frac{\partial \Phi_2}{\partial \xi_2}&=-G_d/G.
\end{align*}
Evaluating these partial derivatives at $\xi^*=(\xi_1^*,0)$, we obtain the Jacobian
\begin{equation*}
    D_\xi \Phi(\xi^*,\xi^*)=
    \begin{bmatrix}
        -U_y & (\xi_1^*)^2U_{yy}-2\xi_1^*G_d(\xi_1^*+D)U_{yz}+[G_d(\xi_1^*+D)]^2U_{zz} \\ 0 & -G_d/G
    \end{bmatrix}.
\end{equation*}
Similarly, differentiating \eqref{eq:Phi_f} with respect to $\eta$, we obtain
\begin{align*}
    \frac{\partial \Phi_1}{\partial \eta_1}&=G_d(U_z+(\eta_1+D)G_d\xi_2U_{zz})-\xi_1G_d\xi_2U_{yz},\\
    \frac{\partial \Phi_1}{\partial \eta_2}&=0,\\
    \frac{\partial \Phi_2}{\partial \eta_1}&=0,\\
    \frac{\partial \Phi_2}{\partial \eta_2}&=1.
\end{align*}
Evaluating these partial derivatives at $\xi^*=(\xi_1^*,0)$, we obtain the Jacobian
\begin{equation*}
    D_\eta\Phi(\xi^*,\xi^*)=\begin{bmatrix}
        G_dU_z & 0\\
        0 & 1
    \end{bmatrix}.
\end{equation*}
Since $D_\eta\Phi$ is nonsingular, we can apply the implicit function theorem, and we obtain the Jacobian of $\phi$
\begin{equation*}
    D\phi(\xi^*)=-[D_\eta\Phi(\xi^*,\xi^*)]^{-1}D_\xi\Phi(\xi^*,\xi^*)=\begin{bmatrix}
        U_y/(G_dU_z) & * \\
        0 & G_d/G
    \end{bmatrix},
\end{equation*}
where the term in $*$ is unimportant. Condition \eqref{eq:cond_f} implies that the first eigenvalue of $D\phi$ is $\lambda_1\coloneqq U_y/(G_dU_z)>1$. Assumption \ref{asmp:G} implies that the second eigenvalue of $D\phi$ is $\lambda_2\coloneqq G_d/G\in (0,1)$. Therefore the steady state $\xi^*$ is a saddle point and we obtain the following proposition.

\begin{prop}\label{prop:f}
The following statements are true.
\begin{enumerate}
    \item\label{item:f1} There exists a unique $w=w_f^*$ satisfying $(U_y/U_z)(1,Gw)=G_d$.
    \item\label{item:f2} There exists a steady state $\xi^*$ of $\Phi$ in \eqref{eq:Phi_f} if and only if $b/a>w_f^*$. Under this condition, there exists a unique path $\set{\xi_t^*}_{t=0}^\infty$ converging to $\xi^*$.
    \item\label{item:f3} The corresponding equilibrium asset price has order of magnitude
    \begin{equation*}
        P_t=\xi_{1t}^*G_d^t\sim \frac{G_dU_z}{U_y-G_dU_z}DG_d^t,
    \end{equation*}
    and there is no asset price bubble.
\end{enumerate}
\end{prop}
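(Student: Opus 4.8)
The plan is to establish the three parts of Proposition~\ref{prop:f} in sequence, leveraging the saddle-point structure already derived for the system $\Phi$ in \eqref{eq:Phi_f}. For part~\ref{item:f1}, I would introduce the function $g(w)\coloneqq(U_y/U_z)(1,Gw)$ on $w\in(0,\infty)$ and argue it is a continuous, strictly decreasing bijection onto $(0,\infty)$: monotonicity follows because $U_{yy}<0$ and $U_{zz}<0$ by Assumption~\ref{asmp:U} (homogeneity of degree~1 makes $U_{yz}>0$, so increasing the second argument raises $U_z$ and, via the budget-type trade-off, one still gets $g$ decreasing — this needs the standard computation using $yU_{yy}+zU_{yz}=0$ from Euler's theorem applied to the degree-zero function $U_y$). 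The Inada conditions $U_y(0,z)=\infty$ and $U_z(y,0)=\infty$ give the limits $g(w)\to\infty$ as $w\to0$ and $g(w)\to0$ as $w\to\infty$, so by the intermediate value theorem there is a unique $w=w_f^*$ with $g(w_f^*)=G_d$.

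For part~\ref{item:f2}, the existence-of-steady-state claim is just a restatement of condition \eqref{eq:cond_f}: by homogeneity of degree zero of $U_y/U_z$, we have $(U_y/U_z)(a,Gb)=(U_y/U_z)(1,G(b/a))=g(b/a)$, and since $g$ is strictly decreasing, $g(b/a)>G_d=g(w_f^*)$ is equivalent to $b/a>w_f^*$. The existence and uniqueness of the convergent path $\set{\xi_t^*}$ then follows from the local stable manifold theorem: the excerpt already computed that $D\phi(\xi^*)$ has eigenvalues $\lambda_1=U_y/(G_dU_z)>1$ and $\lambda_2=G_d/G\in(0,1)$, so the steady state is hyperbolic with a one-dimensional stable manifold. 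Since the dynamics are governed by $\xi_{t+1}=\phi(\xi_t)$ and one coordinate, $\xi_{2t}=(G_d/G)^t$, is pinned down exogenously with $\xi_{2t}\to0$, the relevant initial condition lies on the stable manifold; I would note that Lemma~\ref{lem:backward} already guarantees the equilibrium price sequence is uniquely determined by backward extension, and the stable manifold theorem identifies the unique forward orbit converging to $\xi^*$, which must coincide with it.

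For part~\ref{item:f3}, convergence $\xi_{1t}^*\to\xi_1^*$ gives $P_t=\xi_{1t}^*G_d^t\sim\xi_1^*G_d^t$, and substituting the closed form \eqref{eq:xi_1f} for $\xi_1^*=G_dDU_z/(U_y-G_dU_z)$ (with $U_y,U_z$ at $(a,Gb)$) yields the stated asymptotic order of magnitude. The absence of a bubble follows from the Bubble Characterization Lemma (Lemma~\ref{lem:bubble}): the dividend yield is $D_t/P_t=DG_d^t/(\xi_{1t}^*G_d^t)=D/\xi_{1t}^*\to D/\xi_1^*>0$, which is bounded away from zero, so $\sum_t D_t/P_t=\infty$ and hence there is no bubble. (Equivalently, since $P_t=V_t$ was the maintained hypothesis, no bubble is immediate, but the Lemma gives an independent verification consistent with Assumption~\ref{asmp:G}'s requirement $G_d<G$.)

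The main obstacle I anticipate is the rigorous justification in part~\ref{item:f2} that the actual equilibrium path — constructed via backward induction from Lemma~\ref{lem:backward} — is precisely the stable-manifold trajectory, rather than merely \emph{a} convergent solution of the autonomous system. The subtlety is that the auxiliary-variable reformulation enlarges the state space, and one must check that the constraint $\xi_{2t}=(G_d/G)^t$ is consistent with lying on the stable manifold and that no spurious solutions are introduced; since $\lambda_2=G_d/G$ is exactly the contraction rate of the $\xi_2$ coordinate, the stable manifold is in fact tangent to a subspace containing the $\xi_2$-direction, so this works out, but spelling out that the forward orbit is unique (not just locally) and matches the economically determined sequence requires a careful appeal to both the global backward-uniqueness from Lemma~\ref{lem:backward} and the local uniqueness from the stable manifold theorem, patched together along the orbit.
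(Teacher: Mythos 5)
Your overall strategy is the same as the paper's: intermediate value theorem for $w_f^*$, the saddle-point count (one stable and one unstable eigenvalue matching one exogenous and one endogenous initial condition) plus the local stable manifold theorem and backward extension via Lemma~\ref{lem:backward} for the convergent path, and Lemma~\ref{lem:bubble} applied to the asymptotically constant dividend yield for the no-bubble claim. Parts~\ref{item:f2} (the stable-manifold bookkeeping) and \ref{item:f3} are essentially the paper's argument.

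However, part~\ref{item:f1} as written contains a concrete error: the map $g(w)=(U_y/U_z)(1,Gw)$ is strictly \emph{increasing} in $w$, not decreasing, and the Inada limits are $g(w)\to 0$ as $w\to 0$ (since $U_z(1,0)=\infty$) and $g(w)\to\infty$ as $w\to\infty$ (by homogeneity, $g(w)=(U_y/U_z)(1/(Gw),1)$ and $U_y(0,1)=\infty$). Your own computation already shows this: Euler's theorem applied to the degree-zero function $U_y$ gives $yU_{yy}+zU_{yz}=0$, hence $U_{yz}>0$, and then $\partial_z(U_y/U_z)=(U_{yz}U_z-U_yU_{zz})/U_z^2>0$. (A Cobb--Douglas check, $U_y/U_z=\frac{(1-\beta)z}{\beta y}$, confirms this.) The error propagates into part~\ref{item:f2}: with a \emph{decreasing} $g$, the condition \eqref{eq:cond_f}, namely $g(b/a)>G_d=g(w_f^*)$, would be equivalent to $b/a<w_f^*$, the opposite of what you assert; you reach the correct characterization $b/a>w_f^*$ only because a second sign slip cancels the first. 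The fix is immediate --- replace ``decreasing'' by ``increasing'' and swap the two limits --- but as written the argument for \ref{item:f1}--\ref{item:f2} is internally inconsistent and should be corrected.
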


\subsection{Bubbly equilibria}

We next consider bubbly equilibria, so $P_t>V_t$.

\paragraph{Derivation of autonomous system}

In bubbly equilibria, the bubble size need not grow at the same rate as dividends. Therefore, we define the detrended asset price $p_t\coloneqq G^{-t}P_t$. Dividing the first-order condition \eqref{eq:foc} by $G^t$, we obtain
\begin{equation}
    U_yp_t=GU_z(p_{t+1}+D(G_d/G)^{t+1}), \label{eq:foc_b1}
\end{equation}
where $U_y,U_z$ are evaluated at
\begin{equation}
    (y,z)=((a-p_t)G^t,(b+p_{t+1})G^{t+1}+DG_d^{t+1}). \label{eq:yz_b1}
\end{equation}
Dividing \eqref{eq:yz_b1} by $G^t$ and using the homogeneity of $U$, \eqref{eq:foc_b1} remains valid by evaluating at
\begin{equation}
    (y,z)=(a-p_t,G(b+p_{t+1}+D(G_d/G)^{t+1})). \label{eq:yz_b2}
\end{equation}
To derive the autonomous system, define the auxiliary variable $\xi_t=(\xi_{1t},\xi_{2t})\in \R^2_{++}$ by $\xi_{1t}=p_t=P_t/G^t$ and $\xi_{2t}=(G_d/G)^t$. Then we can write the system as $\Phi(\xi_t,\xi_{t+1})=0$, where $\Phi:\R^4\to \R^2$ is given by
\begin{subequations}\label{eq:Phi_b}
    \begin{align}
        \Phi_1(\xi,\eta)&=G(\eta_1+D\eta_2)U_z-\xi_1U_y, \label{eq:Phi1_b}\\
        \Phi_2(\xi,\eta)&=\eta_2-(G_d/G)\xi_2, \label{eq:Phi2_b}
    \end{align}
\end{subequations}
where the partial derivatives $U_y,U_z$ are evaluated at
\begin{equation*}
    (y,z)=(a-\xi_1,G(b+p_1+D\eta_2)).
\end{equation*}

\paragraph{Steady state}

Let $\xi^*$ be a steady state. As in the fundamental case, we have $\xi_2^*=0$. Then \eqref{eq:Phi1_b} implies
\begin{equation}
    G\xi_1^*U_z-\xi_1^*U_y\iff \xi_1^*=0~\text{or}~\frac{U_y}{U_z}(a-\xi_1^*,G(b+\xi_1^*))=G. \label{eq:xi_1b}
\end{equation}
The economic intuition for the second case in the steady state condition \eqref{eq:xi_1b} is the following. If the asset price exhibits a bubble and its size is non-negligible relative to the economy, it must asymptotically grow at the same rate as the economy, $G$. Then the gross risk-free rate \eqref{eq:Rt} converges to $G$, which is equivalent to \eqref{eq:xi_1b}. Below, we refer to the case $\xi_1^*=0$ ($\xi_1^*>0$) as the fundamental (bubbly) steady state.

\paragraph{Asymptotic behavior}

Again, we apply the implicit function theorem and the local stable manifold theorem to study the asymptotic behavior. Differentiating \eqref{eq:Phi_b} with respect to $\xi$, we obtain
\begin{align*}
    \frac{\partial \Phi_1}{\partial \xi_1}&=-G(\eta_1+D\eta_2)U_{yz}-U_y+\xi_1U_{yy},\\
    \frac{\partial \Phi_1}{\partial \xi_2}&=0,\\
    \frac{\partial \Phi_2}{\partial \xi_1}&=0,\\
    \frac{\partial \Phi_2}{\partial \xi_2}&=-G_d/G.
\end{align*}
Evaluating these partial derivatives at $\xi^*=(\xi_1^*,0)$, we obtain the Jacobian
\begin{equation*}
    D_\xi \Phi(\xi^*,\xi^*)=
    \begin{bmatrix}
        -G\xi_1^*U_{yz}-U_y+\xi_1^*U_{yy} & 0 \\
        0 & -G_d/G
    \end{bmatrix}.
\end{equation*}
Similarly, differentiating \eqref{eq:Phi_b} with respect to $\eta$, we obtain
\begin{align*}
    \frac{\partial \Phi_1}{\partial \eta_1}&=G(U_z+G(\eta_1+D\eta_2) U_{zz})-G\xi_1U_{yz},\\
    \frac{\partial \Phi_1}{\partial \eta_2}&=G(DU_z+GD(\eta_1+D\eta_2) U_{zz})-GD\xi_1U_{yz},\\
    \frac{\partial \Phi_2}{\partial \eta_1}&=0,\\
    \frac{\partial \Phi_2}{\partial \eta_2}&=1.
\end{align*}
Evaluating these partial derivatives at $\xi^*=(\xi_1^*,0)$, we obtain the Jacobian
\begin{equation*}
    D_\eta\Phi(\xi^*,\xi^*)=\begin{bmatrix}
        G(U_z+G\xi_1^* U_{zz}-\xi_1^* U_{yz}) & GD(U_z+G\xi_1^* U_{zz}-\xi_1^* U_{yz})\\
        0 & 1
    \end{bmatrix}.
\end{equation*}
To simplify notation, define
\begin{align*}
    d&\coloneqq G(U_z+G\xi_1^*U_{zz}-\xi_1^*U_{yz}),\\
    n&\coloneqq G\xi_1^*U_{yz}+U_y-\xi_1^*U_{yy},
\end{align*}
where the symbols $d,n$ correspond to the denominator and numerator, as we shall see momentarily. Then $D_\eta\Phi$ is nonsingular if and only if $d\neq 0$, and under this condition, we obtain the Jacobian of $\phi$
\begin{equation*}
    D\phi(\xi^*)=-[D_\eta\Phi(\xi^*,\xi^*)]^{-1}D_\xi\Phi(\xi^*,\xi^*)=\begin{bmatrix}
        n/d & -DG_d/G \\
        0 & G_d/G
    \end{bmatrix}.
\end{equation*}
Therefore the eigenvalues of $D\phi(\xi^*)$ are $\lambda_1\coloneqq n/d$ and $\lambda_2=G_d/G\in (0,1)$. The case in which the argument breaks down are when either $d=0$ (the implicit function theorem is inapplicable) or $n/d=\pm 1$ (the local stable manifold theorem is inapplicable). Therefore we obtain the following proposition regarding equilibrium paths converging to the bubbly steady state.

\begin{prop}\label{prop:b}
The following statements are true.
\begin{enumerate}
    \item\label{item:b1} There exists a unique $w=w_b^*>w_f^*$ satisfying $(U_y/U_z)(1,Gw)=G$.
    \item\label{item:b2} There exists a bubbly steady state $\xi^*>0$ of $\Phi$ in \eqref{eq:Phi_b} if and only if $b/a<w_b^*$. Under this condition, there exists a path $\set{\xi_t^*}_{t=0}^\infty$ converging to $\xi^*$ if $d\neq 0,-n$. The path is unique if $d>0$.
    \item\label{item:b3} The corresponding equilibrium asset price has order of magnitude
    \begin{equation*}
        P_t=\xi_{1t}^*G^t\sim \frac{w_b^*a-b}{1+w_b^*}G^t,
    \end{equation*}
    and there is an asset price bubble.
\end{enumerate}
\end{prop}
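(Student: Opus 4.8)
The plan is to reduce parts \ref{item:b1}--\ref{item:b2} to the scalar marginal-rate-of-substitution map, invoke the implicit function theorem and the local stable manifold theorem as in the fundamental case, and read off part \ref{item:b3} from the Bubble Characterization Lemma (Lemma \ref{lem:bubble}). For \ref{item:b1}, homogeneity of degree $1$ of $U$ makes $U_y,U_z$ homogeneous of degree $0$, so $(U_y/U_z)(y,z)=\mu(z/y)$ where $\mu(r)\coloneqq (U_y/U_z)(1,r)$. Euler's theorem applied to the degree-$0$ function $U_y$ gives $yU_{yy}+zU_{yz}=0$, hence $U_{yz}=-(y/z)U_{yy}>0$, and differentiating $\mu$ then shows $\mu'>0$ since $U_{yz}>0$ and $U_{zz}<0$. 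As $\mu$ is continuous and strictly increasing, attains the value $G_d$ (Proposition \ref{prop:f}\ref{item:f1}), and diverges as its argument tends to $\infty$ (by the Inada condition $U_y(0,z)=\infty$ and homogeneity), it attains every value in $[G_d,\infty)$; hence there is a unique $w_b^*$ with $\mu(Gw_b^*)=G$, and $w_b^*>w_f^*$ because $G>G_d$ and $\mu$ is increasing.

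For the steady state in \ref{item:b2}, the inequality $G_d<G$ forces $\xi_2^*=0$ in \eqref{eq:xi_1b}, and the nontrivial branch reads $\mu\!\left(\frac{G(b+\xi_1^*)}{a-\xi_1^*}\right)=G$, equivalent by \ref{item:b1} to $\frac{b+\xi_1^*}{a-\xi_1^*}=w_b^*$, i.e.\ $\xi_1^*=\frac{w_b^*a-b}{1+w_b^*}$; this lies in $(0,a)$ if and only if $b/a<w_b^*$ (the bound $\xi_1^*<a$ being automatic), which proves the ``if and only if,'' and once convergence is shown it also gives the magnitude formula in \ref{item:b3}. For the local analysis, the Jacobian $D\phi(\xi^*)$ already computed in the text is upper triangular with eigenvalues $\lambda_2=G_d/G\in(0,1)$ and $\lambda_1=n/d$. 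Here $n=G\xi_1^*U_{yz}+U_y-\xi_1^*U_{yy}>0$, every term being positive, and at the bubbly steady state, where \eqref{eq:xi_1b} gives $U_y=GU_z$, a short computation yields $n-d=\xi_1^*\!\left(2GU_{yz}-U_{yy}-G^2U_{zz}\right)>0$. Consequently $\lambda_1=1$ is impossible, $d>0$ forces $\lambda_1>1$, and the only non-hyperbolic configuration is $\lambda_1=-1$, i.e.\ $d=-n$.

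When $d\neq 0$ the implicit function theorem produces the local map $\eta=\phi(\xi)$, and when in addition $d\neq -n$ the local stable manifold theorem produces a local stable manifold $W^s_{\mathrm{loc}}$ through $\xi^*$, of dimension $1$ if $|\lambda_1|>1$ and $2$ if $|\lambda_1|<1$. Since Lemma \ref{lem:backward} makes the backward map $\xi_{t+1}\mapsto\xi_t$ globally well defined and keeps the state in the admissible window $\xi_{1t}\in(0,a)$, the global stable set $W^s=\bigcup_{k\ge 0}\phi^{-k}(W^s_{\mathrm{loc}})$ is well defined, and an equilibrium path converging to $\xi^*$ is exactly a point of $W^s$ on the constraint surface $\{\xi_2=1\}$ (recall $\xi_{2,0}=(G_d/G)^0=1$); such a point exists, which gives existence of the convergent path. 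When $d>0$ we have $|\lambda_1|>1$, so $W^s$ is one-dimensional, and along it $\xi_2$ is a strictly monotone parameter (because $\xi_{2,t+1}=(G_d/G)\xi_{2,t}$ always), so $W^s$ meets $\{\xi_2=1\}$ in the admissible region in at most one point, giving uniqueness; by contrast, $|\lambda_1|<1$ makes $\xi^*$ a sink with a two-dimensional stable set, where uniqueness genuinely fails.

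Part \ref{item:b3} follows: along the path $\xi_{1t}^*\to\xi_1^*=\frac{w_b^*a-b}{1+w_b^*}>0$, so $P_t=\xi_{1t}^*G^t\sim\xi_1^*G^t$, hence $D_t/P_t=DG_d^t/(\xi_{1t}^*G^t)\sim (D/\xi_1^*)(G_d/G)^t$, which is summable because $G_d/G<1$; Lemma \ref{lem:bubble} then gives a bubble. I expect the main obstacle to be the convergent-path claim: the bubbly steady state sits on the boundary face $\xi_2=0$ while the economy starts at $\xi_{2,0}=1$, so the purely local stable manifold theorem must be combined with a global backward-iteration argument that respects the admissibility window $\xi_{1t}\in(0,a)$, and the small but essential ingredient there is the sign identity $n-d>0$ at the bubbly steady state, which is precisely what makes $\lambda_1>1$ when $d>0$ and thereby yields the one-dimensional stable manifold behind uniqueness.
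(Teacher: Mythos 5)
Your proposal is correct and follows essentially the same route as the paper: the same steady-state computation giving $\xi_1^*=\frac{w_b^*a-b}{1+w_b^*}$, the same implicit-function-theorem plus local-stable-manifold argument with backward extension via Lemma \ref{lem:backward}, and the same appeal to Lemma \ref{lem:bubble} for part (iii). The only cosmetic difference is that you establish $n-d>0$ term by term using $U_{yz}>0$ (from Euler's theorem applied to the degree-0 marginals), whereas the paper writes $n-d$ as a quadratic form in the Hessian and invokes strict concavity; both are valid.
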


To complete the analysis, it remains to consider the fundamental steady state $\xi^*=0$. In this case, the eigenvalues of $D\phi(\xi^*)$ are $\lambda_1=n/d=U_y/(GU_z)>0$ and $\lambda_2=G_d/G\in (0,1)$. If $w=b/a<w_b^*$, the definition of $w_b^*$ in Proposition \ref{prop:b} implies that $(U_y/U_z)(a,b)<G$ and hence $\lambda_1<1$, so the fundamental steady state is stable. We thus obtain the following theorem.

\begin{thm}\label{thm:lr}
Let $w=b/a$ be the old-to-young income ratio and define $w_f^*<w_b^*$ as in Propositions \ref{prop:f}, \ref{prop:b}. Then the following statements are true.
\begin{enumerate}
    \item\label{item:lr1} If $w>w_f^*$, there exists a unique equilibrium such that $G_d^{-t}P_t$ converges to a positive number, which is fundamental.
    \item\label{item:lr2} If $w<w_b^*$, there exists an equilibrium such that $G^{-t}P_t$ converges to a positive number, which is bubbly. If in addition $w<w_f^*$, there exist no fundamental equilibria.
    \item\label{item:lr3} If $w_f^*<w<w_b^*$, there exist a continuum of equilibria such that $G^{-t}P_t$ converges to zero, which are all bubbly except the unique equilibrium in \ref{item:lr1}.
\end{enumerate}
\end{thm}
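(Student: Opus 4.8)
The plan is to read Theorem~\ref{thm:lr} as a bookkeeping statement that stitches together the local analyses already in hand---Proposition~\ref{prop:f} (the fundamental chart, whose steady state is a saddle), and Proposition~\ref{prop:b} together with the subsequent linearization of the bubbly chart at both of its steady states---with the backward-reconstruction Lemma~\ref{lem:backward} and the Bubble Characterization Lemma~\ref{lem:bubble}. First I would fix the common framework: in every equilibrium the feasibility constraint $y_t=aG^t-P_t>0$ confines the detrended price $P_t/G^t$ to the bounded interval $(0,a)$, while the auxiliary variable $\xi_{2t}=(G_d/G)^t$ decays to $0$; hence near either steady state of \eqref{eq:Phi_b} the essential motion is one-dimensional, along the $\lambda_1$-eigendirection (the $\xi_1$-axis), the $\lambda_2=G_d/G$ eigendirection being slaved to the forced decay of $\xi_{2t}$. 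I would also record that, since $w\mapsto(U_y/U_z)(1,Gw)$ is strictly increasing (this is what underlies Propositions~\ref{prop:f} and \ref{prop:b}), the chain $G_d<(U_y/U_z)(1,Gw)<G$ is exactly $w_f^*<w<w_b^*$, with the obvious variants at the endpoints.

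For part~\ref{item:lr1}: when $w>w_f^*$, Proposition~\ref{prop:f} supplies a unique orbit of the fundamental-chart system converging to its steady state; Lemma~\ref{lem:backward} promotes this tail to a genuine equilibrium, along which $P_t\sim\frac{G_dU_z}{U_y-G_dU_z}DG_d^t$, so $G_d^{-t}P_t$ tends to a positive limit and $D_t/P_t$ stays bounded away from $0$, whence Lemma~\ref{lem:bubble} rules out a bubble. Conversely, an equilibrium along which $G_d^{-t}P_t$ tends to a positive number is, after detrending by $G_d^t$, an orbit whose limit must satisfy the steady-state relation \eqref{eq:xi_1f} and hence converges to that (unique) steady state; uniqueness of such an orbit is again Proposition~\ref{prop:f}.

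For parts~\ref{item:lr2} and \ref{item:lr3}: when $w<w_b^*$, Proposition~\ref{prop:b} (assuming $d\neq0,-n$, with a unique path when $d>0$) gives an equilibrium with $G^{-t}P_t\to\frac{w_b^*a-b}{1+w_b^*}>0$, along which $D_t/P_t$ decays geometrically at rate $G_d/G<1$, so $\sum D_t/P_t<\infty$ and Lemma~\ref{lem:bubble} makes it bubbly---the existence claim of \ref{item:lr2}. When moreover $w<w_f^*$, Proposition~\ref{prop:f} says the fundamental chart has no steady state, so by the converse argument of the previous paragraph no fundamental equilibrium exists; alternatively one may invoke the necessity result of \citet[Theorem 2]{HiranoToda2025JPE}, whose hypothesis---the counterfactual autarky interest rate be below $G_d$ (with $G_d<G$ granted by Assumption~\ref{asmp:G})---reduces to $w<w_f^*$. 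Finally, when $w_f^*<w<w_b^*$ the eigenvalues at the fundamental steady state of \eqref{eq:Phi_b} satisfy $0<\lambda_2<\lambda_1<1$, so it is a stable node; expanding a nearby orbit in the eigenbasis, its $\lambda_2$-coefficient is pinned to $(G_d/G)^t$ by the forced variable $\xi_{2t}$ while the $\lambda_1$-coefficient is a free scalar parameter. Thus the orbits converging to $\xi_1=0$ form a one-parameter family, each extended by Lemma~\ref{lem:backward} to an equilibrium with $G^{-t}P_t\to0$; for a nonzero value of the parameter the dominant term yields $P_t$ of order $\bigl((U_y/U_z)(1,Gw)\bigr)^t$, geometrically larger than $D_t\sim DG_d^t$, so $\sum D_t/P_t<\infty$ and the equilibrium is bubbly, whereas the zero value recovers $P_t$ of order $G_d^t$, the fundamental equilibrium of \ref{item:lr1}. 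This is the asserted continuum, bubbly except for that single member.

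The step I expect to be the real obstacle is the passage from the local stable-manifold pictures to the global statements. Concretely: (a) justifying that an equilibrium along which $G_d^{-t}P_t$ (resp.\ $G^{-t}P_t$) tends to a positive limit genuinely converges to the corresponding steady state of the autonomous system---so that the uniqueness clauses of Propositions~\ref{prop:f} and \ref{prop:b} apply---which ultimately rests on the bounded, monotone one-dimensional offer-curve dynamics; (b) the knife-edge exclusions already flagged after Proposition~\ref{prop:b}, namely $d=0$ and $n/d=\pm1$ (equivalently $\lambda_1=\pm1$), which is why the trichotomy is stated with strict inequalities $w\neq w_f^*,w_b^*$; and (c) the dimension count in \ref{item:lr3}, where the family is one-parameter rather than two-parameter precisely because the forced variable $\xi_{2t}=(G_d/G)^t$ uses up one dimension of the node's stable set. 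By contrast, the bubble/no-bubble classification is routine: once the order of magnitude of $P_t$ is known, it is an immediate application of Lemma~\ref{lem:bubble}.
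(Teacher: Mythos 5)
Your proposal is correct in substance and follows the same overall architecture as the paper's proof (Propositions \ref{prop:f} and \ref{prop:b}, Lemma \ref{lem:backward}, Lemma \ref{lem:bubble}, and Theorem 2 of \citet{HiranoToda2025JPE} for the nonexistence claim), but part \ref{item:lr3} is argued by a genuinely different route. You classify the continuum of orbits converging to the stable node $\xi^*=(0,0)$ of \eqref{eq:Phi_b} by their eigen-decomposition and read off the asymptotic order of $P_t$ (of order $R^t$ with $R=(U_y/U_z)(1,Gw)$ when the $\lambda_1$-coefficient is nonzero, of order $G_d^t$ otherwise) to decide bubbliness via Lemma \ref{lem:bubble}. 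The paper instead shows that along \emph{any} equilibrium with $G^{-t}P_t\to 0$ the interest rate $R_t$ converges to $R>G_d$, hence the fundamental value satisfies $V_t/G_d^t\to DG_d/(R-G_d)>0$; so any fundamental member of the continuum must have $G_d^{-t}P_t$ converging to a positive number and is therefore the unique equilibrium of part \ref{item:lr1} by Proposition \ref{prop:f}. The paper's route is cleaner: it needs only the limit of $R_t$ and the uniqueness clause of Proposition \ref{prop:f}, whereas yours requires the sharper claim that a generic orbit of the \emph{nonlinear} map decays at exactly the rate $\lambda_1$ rather than merely being $O(\lambda_1^t)$ --- true for a node with $\lambda_1>\lambda_2$, but it rests on the strong-stable-foliation argument you only gesture at in your closing paragraph.

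One caveat on part \ref{item:lr2}: your primary argument for the nonexistence of fundamental equilibria when $w<w_f^*$ --- that the fundamental chart has no steady state, so no fundamental equilibrium exists --- is precisely the heuristic the paper cautions against in the discussion following Theorem \ref{thm:necessity_OLG}: a fundamental equilibrium need not have $G_d^{-t}P_t$ convergent, so the absence of a steady state of the detrended system does not by itself rule one out. Your fallback, invoking Theorem 2 of \citet{HiranoToda2025JPE} with the counterfactual autarky rate $(U_y/U_z)(1,Gw)<G_d<G$, is exactly what the paper does and is the justification you should keep.
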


Consider a situation where the incomes of the young rise relative to the incomes of the old. Theorem \ref{thm:lr} implies that asset pricing implications markedly change with economic development. In other words, when the incomes of the young are relatively low, the asset price reflects the fundamental value. When the incomes of the young rise and exceed a critical threshold, the economy enters a new phase in which both bubbly and fundamental equilibria can coexist. Once the incomes of the young reach a still higher critical threshold, the situation changes dramatically. That is, the only possible equilibrium is one that features asset price bubbles. Moreover, the existence of a continuum of equilibria in the intermediate region implies that asset price volatility would be highest with a medium level of economic development.

\section{Necessity of stock and land bubbles}\label{sec:land}

In this section, we consider a model with two assets, stocks and land, and show the necessity of bubbles in aggregate stock and land prices.

\subsection{Model}

The model is essentially a combination of \S III.B, III.C of \citet{HiranoToda2025JPE}. Consider a deterministic two-period OLG economy with a homogeneous good and log utility \eqref{eq:logutility}. There are two sectors, a capital-intensive sector (\eg, manufacturing) and a land-intensive sector (\eg, agriculture). In the capital-intensive sector, a representative firm produces the output using the neoclassical production function $F(K,L)$, where $K,L>0$ denote the capital and labor inputs. For simplicity, we exogenously specify the capital and labor supply at time $t$ as $K_t,L_t>0$.\footnote{We may also consider endogenous capital accumulation and labor supply, but the asset pricing implications are the same. \citet*{HiranoKishiTodaBursting} consider a model of stochastic stock bubbles with endogenous technological innovation.} A stock is a claim to capital rents; let $N>0$ denote the number of shares outstanding and $Q_t>0$ be the stock price at time $t$. In the land-intensive sector, a unit of land produces $D_t>0$ units of output; let $X>0$ denote the aggregate land supply and $P_t>0$ be the land price at time $t$.

The firm takes the capital rental rate $r_t>0$ and wage rate $w_t>0$ as given and maximizes the profit
\begin{equation*}
    F(K,L)-r_tK-w_tL,
\end{equation*}
which implies the first-order conditions $r_t=F_K(K_t,L_t)$ and $w_t=F_L(K_t,L_t)$. The capital rent is paid out as stock dividend, which equals $r_tK_t/N$ per share. The land dividend equals $D_t$ per unit. Let $R_t$ be the gross risk-free rate. Because the economy is deterministic, both the stock and land must yield the same return and the no-arbitrage condition
\begin{equation}
    R_t\coloneqq \frac{Q_{t+1}+r_{t+1}K_{t+1}/N}{Q_t}=\frac{P_{t+1}+D_{t+1}}{P_t} \label{eq:noarbitrage_SL}
\end{equation}
holds. Let
\begin{equation}
    S_t\coloneqq Q_tN+P_tX \label{eq:S}
\end{equation}
be the aggregate asset value and 
\begin{equation}
    E_t\coloneqq r_tK_t+D_tX=F_K(K_t,L_t)K_t+D_t \label{eq:E}
\end{equation}
be the aggregate dividend. Using the no-arbitrage condition \eqref{eq:noarbitrage_SL}, we obtain
\begin{align}
    R_tS_t&=R_t(Q_tN+P_tX) \notag \\
    &=(Q_{t+1}N+r_{t+1}K_{t+1})+(P_{t+1}+D_{t+1})X\notag \\
    &=(Q_{t+1}N+P_{t+1}X)+(r_{t+1}K_{t+1}+D_{t+1}X) \notag \\
    &=S_{t+1}+E_{t+1}. \label{eq:noarbitrage_S}
\end{align}
By the same argument as in the proof of Proposition \ref{prop:textbook}, in equilibrium the aggregate asset value equals aggregate savings: $S_t=\beta w_tL_t=\beta F_L(K_t,L_t)L_t$. We thus obtain the following proposition.

\begin{prop}\label{prop:asset_bubble}
In equilibrium, the aggregate asset value, aggregate dividend, and gross risk-free rate are uniquely given by $S_t=\beta F_L(K_t,L_t)L_t$, \eqref{eq:E}, and \eqref{eq:noarbitrage_S}. There is a bubble in the aggregate asset market if and only if
\begin{equation}
    \sum_{t=1}^\infty \frac{F_K(K_t,L_t)K_t+D_tX}{F_L(K_t,L_t)L_t}<\infty. \label{eq:asset_bubble}
\end{equation}
\end{prop}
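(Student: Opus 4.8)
The plan is to collapse the two-asset structure into a single Lucas-tree recursion and then invoke the Bubble Characterization Lemma. Equation \eqref{eq:noarbitrage_S}, $R_tS_t=S_{t+1}+E_{t+1}$, derived above, is already in the form of the risk-free no-arbitrage condition \eqref{eq:noarbitrage}, with the ``price'' now the aggregate asset value $S_t=Q_tN+P_tX$ and the ``dividend'' the aggregate payout $E_t=F_K(K_t,L_t)K_t+D_tX\ge 0$. So the only thing that must be done by hand is to pin down $S_t$. For this I would repeat the argument from the proof of Proposition \ref{prop:textbook}: log utility \eqref{eq:logutility} makes the young save a fraction $\beta$ of their income regardless of asset returns; since the young own no capital or land and supply labor inelastically, their income is the wage bill $w_tL_t=F_L(K_t,L_t)L_t$, so aggregate household saving is $\beta F_L(K_t,L_t)L_t$; and since the old exit and liquidate, asset market clearing forces the young to buy all $N$ shares and all $X$ units of land, for a total outlay $Q_tN+P_tX=S_t$. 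Hence $S_t=\beta F_L(K_t,L_t)L_t$, which is uniquely determined and strictly positive because $\beta\in(0,1)$, $F$ is neoclassical (so $F_L>0$), and $K_t,L_t>0$; then \eqref{eq:noarbitrage_S} pins down $R_t=(S_{t+1}+E_{t+1})/S_t$. Existence of an equilibrium follows from the standard truncation argument \citep{BalaskoShell1980}, exactly as in \S\ref{sec:necessity}.

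Given Step 1, the conclusion is immediate. Define a bubble in the aggregate asset market as $S_t>V_t^S$, where $V_t^S\coloneqq\frac{1}{q_t}\sum_{s=t+1}^\infty q_sE_s$ is the present discounted value of aggregate payouts and $q_t$ is the date-0 Arrow--Debreu price constructed from $\set{R_s}$ as in the proof of Lemma \ref{lem:bubble}; equivalently, a bubble means $\lim_{T\to\infty}q_TS_T>0$. Since $S_t>0$ for all $t$ and \eqref{eq:noarbitrage_S} has exactly the structure to which Lemma \ref{lem:bubble} applies (with $(P_t,D_t)$ replaced by $(S_t,E_t)$), Lemma \ref{lem:bubble} yields: there is a bubble in the aggregate asset market if and only if $\sum_{t=1}^\infty E_t/S_t<\infty$. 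Substituting $S_t=\beta F_L(K_t,L_t)L_t$ and $E_t=F_K(K_t,L_t)K_t+D_tX$ and cancelling the positive constant $\beta$, this is exactly condition \eqref{eq:asset_bubble}.

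The proof is short precisely because \eqref{eq:noarbitrage_S} and Lemma \ref{lem:bubble} carry the load, so there is no real obstacle; the one point that deserves care is Step 1, and within it the fact that having two traded assets rather than one (as in Proposition \ref{prop:textbook}) changes nothing. Here no-arbitrage \eqref{eq:noarbitrage_SL} makes the young indifferent about the composition of their stock--land portfolio, so individual prices $Q_t$ and $P_t$ (and the individual bubble components) may be indeterminate, but their \emph{total} value $S_t$, and hence the gross interest rate $R_t$ and the aggregate-bubble characterization, are uniquely determined. One may also note in passing that, since $q_TQ_TN$ and $q_TP_TX$ are each nonincreasing and nonnegative (by the respective no-arbitrage conditions) and sum to $q_TS_T$, a bubble in the aggregate asset market is equivalent to a bubble in the stock or in the land; but this refinement is not needed for the proposition as stated.
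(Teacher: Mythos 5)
Your proof is correct and follows essentially the same route as the paper, which derives $S_t=\beta F_L(K_t,L_t)L_t$ from the log-utility saving rule and asset market clearing and then declares the result ``immediate from the main text and Lemma \ref{lem:bubble}'' applied to the aggregate recursion \eqref{eq:noarbitrage_S} with $(S_t,E_t)$ in place of $(P_t,D_t)$. Your write-up simply makes explicit the steps the paper leaves implicit (including the harmless cancellation of the constant $\beta$ and the irrelevance of the stock--land portfolio composition for the aggregate value), so there is nothing to correct.
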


\begin{proof}
Immediate from the main text and Lemma \ref{lem:bubble}.
\end{proof}

\subsection{Bubble substitution}

Interestingly, even though the equilibrium allocation is unique, the stock and land prices may be indeterminate. To see why, let $q_0=1$ and $q_t=1/\prod_{s=0}^{t-1}R_s$ be the (unique) Arrow-Debreu prices and define the fundamental values of stock and land by
\begin{align*}
    V_t^S&\coloneqq \frac{1}{q_tN}\sum_{s=t+1}^\infty q_sr_sK_s,\\
    V_t^L&\coloneqq \frac{1}{q_t}\sum_{s=t+1}^\infty q_sD_s.
\end{align*}
Define the aggregate bubble by
\begin{equation*}
    B_t\coloneqq S_t-(V_t^SN+V_t^LX)\ge 0.
\end{equation*}
Using \eqref{eq:S}--\eqref{eq:noarbitrage_S}, we obtain $B_{t+1}=R_tB_t$. For any $\theta\in [0,1]$, define the stock and land prices by
\begin{align*}
    Q_t&=V_t^S+\frac{\theta}{N}B_t,\\
    P_t&=V_t^L+\frac{1-\theta}{X}B_t.
\end{align*}
Then clearly $Q_t,P_t$ satisfy the no-arbitrage condition \eqref{eq:noarbitrage_SL}, so we have a continuum of equilibria indexed by $\theta\in [0,1]$ if there is a bubble ($B_t>0$). It is easy to show that every deterministic equilibrium takes this form.

We note that even though the bubble sizes on individual assets are indeterminate (because stocks and land are perfect substitutes), the total size of the bubble is determinate and hence the consumption allocation is identical regardless of the size of the bubble attached to each asset. This argument is the same as the ``bubble substitution'' argument in \citet[\S5]{Tirole1985}.

\subsection{Productivity growth and bubbles}

Finally, we consider a simple example to study under what conditions bubbles emerge. Let the production function exhibit constant elasticity of substitution (CES), so
\begin{equation}
    F(K,L)=\begin{cases*}
        \left(\alpha K^{1-1/\sigma}+(1-\alpha)L^{1-1/\sigma}\right)^\frac{1}{1-1/\sigma} & if $0<\sigma\neq 1$,\\
        K^\alpha L^{1-\alpha} & if $\sigma=1$,
    \end{cases*} \label{eq:CES}
\end{equation}
where $\sigma>0$ is the elasticity of substitution and $\alpha\in (0,1)$ is a parameter. Suppose capital, labor, and land rent grow at constant rates, so $K_t=K_0G_K^t$, $L_t=L_0G_L^t$, and $D_t=D_0G_X^t$, where $G_K,G_L,G_X>0$. Empirical evidence suggests that the capital-labor substitution elasticity is less than 1,\footnote{See \citet{OberfieldRaval2021} for a study using micro data and \citet*{GechertHavranekIrsovaKolcunova2022} for a literature review and metaanalysis.} so set $\sigma<1$. A straightforward calculation shows
\begin{subequations}
\begin{align}
    F_K(K,L)&=\left(\alpha K^{1-1/\sigma}+(1-\alpha)L^{1-1/\sigma}\right)^\frac{1}{\sigma-1}\alpha K^{-1/\sigma}, \label{eq:straightK}\\
    F_L(K,L)&=\left(\alpha K^{1-1/\sigma}+(1-\alpha)L^{1-1/\sigma}\right)^\frac{1}{\sigma-1}(1-\alpha) L^{-1/\sigma}. \label{eq:straightL}
\end{align}
\end{subequations}
Therefore
\begin{equation}
    \frac{F_K(K,L)K}{F_L(K,L)L}=\frac{\alpha}{1-\alpha}(K/L)^{1-1/\sigma}. \label{eq:straightKL}
\end{equation}
There are two cases to consider.

\begin{case}[$G_K\le G_L$]
In this case, using $\sigma<1$ and \eqref{eq:straightKL}, each term in \eqref{eq:asset_bubble} is positive and bounded away from zero, so the sum in \eqref{eq:asset_bubble} diverges. Therefore there are no bubbles.
\end{case}

\begin{case}[$G_K>G_L$]
In this case, using $\sigma<1$ and \eqref{eq:straightKL}, we have
\begin{equation*}
    \sum_{t=1}^\infty \frac{F_K(K_t,L_t)K_t}{F_L(K_t,L_t)L_t}=\frac{\alpha}{1-\alpha}\sum_{t=1}^\infty (K_0/L_0)^{1-1/\sigma}(G_K/G_L)^{(1-1/\sigma)t}<\infty.
\end{equation*}
Furthermore, using \eqref{eq:straightL} we have
\begin{equation*}
    \frac{D_t}{F_L(K_t,L_t)L_t}\sim \frac{D_0G_X^t}{(1-\alpha)^{1-1/\sigma}L_0G_L^t},
\end{equation*}
whose sum converges if and only if $G_X<G_L$.
\end{case}

Therefore we obtain the following proposition.

\begin{prop}\label{prop:asset_bubble2}
If the production takes the CES form \eqref{eq:CES} with $\sigma<1$ and
\begin{equation*}
    (K_t,L_t,D_t)=(K_0G_K^t,L_0G_L^t,D_0G_X^t),
\end{equation*}
then there is a bubble in the aggregate asset market if and only if $G_K>G_L>G_X$. 
\end{prop}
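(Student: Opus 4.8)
The plan is to reduce everything to the convergence of a single positive series via Proposition~\ref{prop:asset_bubble}, which states that a bubble exists in the aggregate asset market exactly when
\begin{equation*}
    \sum_{t=1}^\infty \frac{F_K(K_t,L_t)K_t+D_tX}{F_L(K_t,L_t)L_t}<\infty .
\end{equation*}
Because each summand is a sum of two nonnegative quantities, convergence of this series is equivalent to simultaneous convergence of $\sum_t\frac{F_K(K_t,L_t)K_t}{F_L(K_t,L_t)L_t}$ and $\sum_t\frac{D_tX}{F_L(K_t,L_t)L_t}$, so I would treat these two subseries separately.

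For the capital-rent subseries I would substitute the closed form \eqref{eq:straightKL} and the geometric paths $(K_t,L_t)=(K_0G_K^t,L_0G_L^t)$, obtaining $\frac{F_K(K_t,L_t)K_t}{F_L(K_t,L_t)L_t}=\frac{\alpha}{1-\alpha}(K_0/L_0)^{1-1/\sigma}\bigl((G_K/G_L)^{1-1/\sigma}\bigr)^{t}$, a geometric sequence with ratio $(G_K/G_L)^{1-1/\sigma}$. Since $\sigma<1$ makes the exponent $1-1/\sigma$ negative, this ratio is strictly less than $1$ exactly when $G_K>G_L$ and is at least $1$ otherwise; hence this subseries converges if and only if $G_K>G_L$. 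In particular, if $G_K\le G_L$ the full series diverges and there is no bubble, so it remains to analyze the case $G_K>G_L$.

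For the land-dividend subseries, assuming now $G_K>G_L$, I would first pin down the asymptotics of the wage bill $F_L(K_t,L_t)L_t=\bigl(\alpha K_t^{1-1/\sigma}+(1-\alpha)L_t^{1-1/\sigma}\bigr)^{1/(\sigma-1)}(1-\alpha)L_t^{1-1/\sigma}$ from \eqref{eq:straightL}. Since $1-1/\sigma<0$ and $K_t/L_t=(K_0/L_0)(G_K/G_L)^t\to\infty$, the ratio $K_t^{1-1/\sigma}/L_t^{1-1/\sigma}=(K_t/L_t)^{1-1/\sigma}\to 0$, so inside the bracket the labor term dominates: $\alpha K_t^{1-1/\sigma}+(1-\alpha)L_t^{1-1/\sigma}\sim(1-\alpha)L_t^{1-1/\sigma}$. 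Feeding this in shows $F_L(K_t,L_t)L_t\sim c\,L_t=cL_0G_L^t$ for a positive constant $c$ (which can be written out explicitly but whose exact value is irrelevant), so $\frac{D_tX}{F_L(K_t,L_t)L_t}\sim\frac{D_0X}{cL_0}(G_X/G_L)^t$, and this subseries converges if and only if $G_X<G_L$. Combining the two, the series converges---equivalently, there is a bubble---if and only if $G_K>G_L$ and $G_X<G_L$, i.e.\ $G_K>G_L>G_X$, which is the claim.

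The computations are elementary, so I do not expect a genuine obstacle; the only points requiring care are (i) splitting the series into two nonnegative subseries so that convergence becomes componentwise, and (ii) correctly tracking the sign of the exponent $1-1/\sigma$ when deciding which of $\alpha K_t^{1-1/\sigma}$ and $(1-\alpha)L_t^{1-1/\sigma}$ is dominant---because $1-1/\sigma<0$, the faster-growing input yields the asymptotically smaller term, which is the one place a careless argument could slip. (Indeed, the two displayed case computations immediately preceding the statement already carry this out, so the proof merely assembles them with Proposition~\ref{prop:asset_bubble}.)
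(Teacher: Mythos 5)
Your proposal is correct and follows essentially the same route as the paper: it invokes Proposition \ref{prop:asset_bubble}, splits the series into the capital-rent and land-dividend subseries, and uses \eqref{eq:straightKL} and \eqref{eq:straightL} with the sign of $1-1/\sigma$ to get geometric behavior with ratios $(G_K/G_L)^{1-1/\sigma}$ and $G_X/G_L$, exactly as in the two Case computations preceding the statement. No gaps; the only cosmetic difference is that you leave the wage-bill constant unspecified, which is harmless since only the growth rate matters for convergence.
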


Proposition \ref{prop:asset_bubble2} implies that bubbles in aggregate stock and land prices necessarily emerge if the capital growth rate exceeds the labor growth rate (possibly due to firm creation and innovation) \footnote{If we consider a situation in which capital is created by entrepreneurs, the capital growth rate means the labor productivity growth rate of entrepreneurs.} and the labor growth rate exceeds the land rent growth rate (possibly due to the declining importance of agriculture). Intuitively, as a bubble is attached to the price of capital with high productivity growth rates, part of earned labor income flows into the real estate sector with lower productivity growth rates, causing land prices to rise at a faster pace than land rents. Because the growth rate of land rent is lower than that of wage income, the rising incomes pull up the land price. This generates land bubbles simultaneously. 

The phenomenon of bubbles in the stock prices of booming sectors or production factors with high productivity growth can be seen in the {U.S.} dot-com bubble in the late 1990s. The result of earned income flowing into the real estate market and causing a bubble can be used to explain housing bubbles. The phenomenon of simultaneous stock and land bubbles may account for the bubbly episodes that many countries experienced, including Japan that experienced substantial increases in land and stock prices in the late 1980s. 

To illustrate the result that simultaneous bubbles emerge in a simple setting, we abstract from realistic elements such as borrowing restrictions or mortgages, which are important for accounting for realistic bubbles. We can expect that incorporating such elements will affect the condition for the simultaneous emergence of stock and land bubbles. In addition, the idea of simultaneous bubbles can be applied to international spillover \ie, how a bubble in one country has a contagious effect on another country, or more generally, how financial market integration affects the emergence of bubbles in each country. In this way, the analysis here has the potential to open up new research directions.

\section{Concluding remarks}

We conclude the chapter by commenting on important implications for macro-theory construction. Throughout this chapter, we emphasized the importance of unbalanced growth. Our construction of macro-finance models that allow unbalanced growth to occur provides a new perspective on the methodology of macro-theory construction. Many macro-finance models are constructed so that the economy converges to a balanced growth path with a constant price-dividend ratio along a single dynamic path that can be drawn with one stroke of brush (usually a saddle path). To think about bubbles attached to real assets, we need to build a model so that a dynamic path that deviates from a stable path converging to balanced growth, \ie, a dynamic path with unbalanced growth, is also possible. 

It should be noted that even if the economy deviates from the balanced growth path and rides on the dynamic path with unbalanced growth, if circumstances unexpectedly change ex post, the economy may return to the balanced growth path where the price-dividend ratio is stable. Looking at this dynamics from an ex post perspective, it appears as if the macro-economy has temporarily left the stable path and taken on a bubble path and then collapsed. During these dynamics, the price-dividend ratio exhibits an explosive increase and fall. 

Moreover, in reality, if policymakers decide that the observed price-dividend ratio appears to be too high, they tend to impose taxes on capital gains or land transactions. If taxes are sufficiently raised, the stock and/or land bubble will surely collapse and the price-dividend ratio will converge to a stable value. With loosening and tightening of the tax policy (in a way contrary to private agents' expectations), the macro-economy may switch back and forth between fundamental and bubbly states, with upward and downward movements in the price-dividend ratio. In reality, this process may repeat itself. Hence, from these reasons, the property of $P_t/D_t \to \infty$ implied by Lemma \ref{lem:bubble} should not be taken literally. 

Furthermore, once we consider aggregate uncertainty, it enriches the dynamics of the price-dividend ratio and provides another new insight. With stochastic and persistent fluctuations in productivity, \citet[\S 4.2]{HiranoToda2025PNAS} show that land prices fluctuate, with the price-dividend ratio rising and falling repeatedly, which appears to be the onset and bursting of a land price bubble. However, land prices always contain bubbles and therefore in an environment with aggregate risks, even if the price-dividend ratio appears to be stable for an extended period of time, it does not necessarily mean land prices reflect fundamentals. So long as the bubble necessity condition with aggregate uncertainty is satisfied, land prices always contain a bubble and the bubble size is changing. When we consider stochastic bubbles, whether they are stock market bubbles or land bubbles, bubbles are expected to collapse at some point in the future. Agents expect the price-dividend ratio to rise explosively as long as the bubble persists, then fall at the time of the burst, and then converge to a stable value at some point in the future. Despite this, bubbles must occur under some conditions. \citet*{HiranoKishiTodaBursting} present such a model and show that the dynamics with stochastic bubbles, which is characterized by unbalanced growth, can be seen as a temporary deviation from a balanced growth path in which asset prices reflect the fundamentals.

\appendix

\section{Proofs}\label{sec:proof}

\subsection{Proof of Proposition \ref{prop:bubble_MM}}

\ref{item:MM1} Under the maintained assumption, we can take $0<\ubar{S}\le \bar{S}$ such that $\ubar{S}\le S_t\le \bar{S}$ for all $t$. Since
\begin{equation*}
    q_tp_t\ubar{S}\le q_tP_t=q_tp_tS_t\le q_tp_t\bar{S},
\end{equation*}
the conditions \eqref{eq:bubble_stock} and \eqref{eq:bubble_value} are equivalent.

\ref{item:MM2} By iterating \eqref{eq:noarbitrage_stock}, we obtain
\begin{equation}
    p_0=\sum_{t=1}^Tq_td_t+q_Tp_T\ge q_Tp_T, \label{eq:qp_ub}
\end{equation}
where we have used $d_t\ge 0$. Similarly, we have
\begin{equation}
    P_0\ge q_TP_T. \label{eq:qP_ub}
\end{equation}
Let $\liminf_{t\to\infty}S_t=0$. Then
\begin{equation*}
    \liminf_{t\to\infty} q_tP_t=\liminf_{t\to\infty} q_tp_tS_t\le \liminf_{t\to\infty} p_0S_t=0,
\end{equation*}
where the inequality follows from \eqref{eq:qp_ub}. Since $q_tP_t$ has a limit, it follows that $q_tP_t\to 0$, so \eqref{eq:bubble_value} implies that there is no bubble in firm value.

\ref{item:MM3} Let $\limsup_{t\to\infty}S_t=\infty$. Suppose to the contrary that there is a bubble in firm stock. Then \eqref{eq:bubble_stock} implies that $q_tp_t$ converges to a positive constant, so by \eqref{eq:qP_ub} we obtain
\begin{equation*}
    P_0\ge \limsup_{t\to\infty} q_tP_t=\left(\lim_{t\to\infty} q_tp_t\right)\left(\limsup_{t\to\infty}S_t\right)=\infty,
\end{equation*}
which is a contradiction. \hfill \qedsymbol

\subsection{Proof of Proposition \ref{prop:wilson}}

Let $P_t>0$ be any equilibrium asset price. Because the old exit the economy, the equilibrium consumption allocation is
\begin{equation*}
    (y_t,z_t)=(aG^t-P_t,bG^t+P_t+D_t).
\end{equation*}
Nonnegativity of consumption implies $P_t\le aG^t$. Let $R_t\coloneqq (P_{t+1}+D_{t+1})/P_t$ be the gross risk-free rate. The first-order condition for optimality together with the nonnegativity of consumption implies that $R_t\ge 1/\beta$, with equality if $P_t<aG^t$. Suppose $R_t>1/\beta$. Then $P_t=aG^t$, so
\begin{align*}
    R_{t-1}\coloneqq \frac{P_t+D_t}{P_{t-1}}&=\frac{aG^t+DG_d^t}{P_{t-1}}\ge \frac{aG^t+DG_d^t}{aG^{t-1}} && (\because P_{t-1}\le aG^{t-1})\\
    &=\frac{aG^{t+1}+DGG_d^t}{aG^t} > \frac{aG^{t+1}+DG_d^{t+1}}{P_t} && (\because G>G_d, P_t=aG^t)\\
    &\ge \frac{P_{t+1}+D_{t+1}}{P_t}=R_t>\frac{1}{\beta}. && (\because P_{t+1}\le aG^{t+1})
\end{align*}
Therefore by induction, if $R_t>1/\beta$, then $R_s>1/\beta$ for all $s\le t$. This argument shows that, in equilibrium, either
\begin{enumerate*}
    \item\label{item:R=} there exists $T>0$ such that $R_t=1/\beta$ for all $t\ge T$, or
    \item\label{item:R>} $R_t>1/\beta$ for all $t$.
\end{enumerate*}
In Case \ref{item:R=}, using $1/R_t=\beta$ for $t\ge T$ and $1/\beta<G_d$, the asset price at time $t\ge T$ can be bounded from below as
\begin{equation*}
    P_t\ge V_t=\sum_{s=1}^\infty \beta^s DG_d^{t+s}=\sum_{s=1}^\infty DG_d^t(\beta G_d)^s=\infty,
\end{equation*}
which is impossible in equilibrium. Therefore it must be Case \ref{item:R>} and hence $P_t=aG^t$ and $y_t=0$ for all $t$. In this case, we have
\begin{equation*}
    R_t=\frac{aG^{t+1}+DG_d^{t+1}}{aG^t}>G>\frac{1}{\beta},
\end{equation*}
so the first-order condition holds and we have an equilibrium, which is unique. Using $P_t=aG^t$, $D_t=DG_d^t$, and applying Lemma \ref{lem:bubble}, we immediately see that there is a bubble. \hfill \qedsymbol

\subsection{Proof of Proposition \ref{prop:bewley}}

Let $g(P)\coloneqq \beta u'(b+P)-u'(a-P)$. Then
\begin{equation*}
    g'(P)=\beta u''(b+P)+u''(a-P)<0,
\end{equation*}
so $g$ is strictly decreasing. Under the maintained assumption, we have $g(0)=\beta u'(b)-u'(a)>0$ and $g(a)=\beta u'(b+a)-u'(0)=-\infty$. By the intermediate value theorem, there exists a unique $P\in (0,a)$ satisfying $g(P)=0$, so \eqref{eq:euler_rich} holds. Using \eqref{eq:euler_rich} and $\beta<1$, we obtain \eqref{eq:euler_poor}.

To show that we have an equilibrium it suffices to show the transversality condition for optimality $\lim_{t\to\infty}\beta^t u'(c_t)P_t=0$, where $c_t$ is the consumption of any agent \citep[p.~237, Example 15.3]{TodaEME}. However, this is obvious because $P_t=P$ is constant, $c_t$ is alternating between two values, and $\beta\in (0,1)$. \hfill \qedsymbol

\subsection{Proof of Lemma \ref{lem:backward}}

We may rewrite \eqref{eq:foc} as
\begin{equation}
    g(P)\coloneqq \frac{U_z}{U_y}(a-P,b'+P'+D')(P'+D')-P=0, \label{eq:foc2}
\end{equation}
where we write $a=a_t$, $b'=b_{t+1}$, etc. Clearly, $g$ is continuous. The monotonicity and quasi-concavity of $U$ (Assumption \ref{asmp:U}) imply that $g$ is strictly decreasing and satisfies $g(0)>0$. The Inada condition $U_y(0,z)=\infty$ implies that $g(a)=0-a<0$. By the intermediate value theorem, there exists a unique $P\in (0,a)$ such that $g(P)=0$. \hfill \qedsymbol

\subsection{Proof of Proposition \ref{prop:f}}

\ref{item:f1} The twice differentiability and strict quasi-concavity of $U$ implies that $(U_y/U_z)(1,Gw)$ is continuous and strictly increasing in $w$. Furthermore, the Inada condition in Assumption \ref{asmp:U} implies that its range is $(0,\infty)$. By the intermediate value theorem, there exists a unique $w_f^*>0$ with $(U_y/U_z)(1,Gw_f^*)=G_d$.

\medskip
\noindent
\ref{item:f2} By \eqref{eq:cond_f}, \ref{item:f1}, and using the homogeneity of $U$, there exists a steady state $\xi^*$ of $\Phi$ if and only if $b/a>w_f^*$. As discussed in the main text, the eigenvalues of $D\phi$ are $\lambda_1>1$ and $\lambda_2\in (0,1)$. Fix large enough $t_0\in \N$. The number of stable eigenvalues (those with $\abs{\lambda}<1$) is 1, which is equal to the number of exogenous initial conditions (which is $\xi_{2t_0}=(G_d/G)^{t_0}$). The number of unstable eigenvalues (those with $\abs{\lambda}>1$) is 1, which is equal to the number of endogenous initial conditions (which is $\xi_{1t_0}=p_{t_0}$). Therefore, by the local stable manifold theorem \citep[p.~111, Theorem 8.9]{TodaEME}, there exists an open neighborhood $\Xi$ of $\xi^*$ with $\phi(\Xi)\subset \Xi$ such that, for any $\xi_{2t_0}$ sufficiently close to $\xi_2^*=0$, there exists a unique $p_{t_0}$ such that $\xi_{t_0}\in \Xi$ and $\xi_t\coloneqq \phi^{t-t_0}(\xi_{t_0})\to \xi^*$. Since $\xi_{2t_0}=(G_d/G)^{t_0}$ and $G_d\in (0,G)$, we can achieve this condition taking $t_0\in \N$ large enough. Let $\xi_t=(\xi_{1t},\xi_{2t})$. Since $\xi_1^*>0$, without loss of generality we may assume $\Xi\subset \R_{++}\times \R$. Then $\xi_t\in \Xi$ implies $p_t=\xi_{1t}>0$. By construction, $\xi_{2t}=(G_d/G)^t>0$. We can then uniquely extend $\set{\xi_t}$ backwards in time by Lemma \ref{lem:backward}.

\medskip
\noindent
\ref{item:f3} By the definition of $\xi_t$, we have $P_t=\xi_{1t}^*G_d^t$. Since $\xi_{1t}^*\to \xi_1^*$, the order of magnitude follows from the characterization of the steady state \eqref{eq:xi_1f}. Since both $P_t$ and dividends grow at rate $G_d$, we have $\sum_{t=1}^\infty D_t/P_t=\infty$, so there is no asset price bubble by Lemma \ref{lem:bubble}. \hfill \qedsymbol

\subsection{Proof of Proposition \ref{prop:b}}
\ref{item:b1} The existence and uniqueness of $w_b^*$ follows from the same argument as in the proof of Proposition \ref{prop:f}, and $w_b^*>w_f^*$ follows from the monotonicity of $(U_y/U_z)(1,Gw)$ and $G_d<G$.

\medskip
\noindent
\ref{item:b2} Using the homogeneity of $U$ and the definition of $w_b^*$, the steady state condition \eqref{eq:xi_1b} is equivalent to
\begin{equation*}
    \frac{b+\xi_1^*}{a-\xi_1^*}=w_b^*\iff \xi_1^*=\frac{w_b^*a-b}{1+w_b^*}.
\end{equation*}
Therefore such $\xi_1^*>0$ exists if and only if $b/a<w_b^*$. The existence of a path $\set{\xi_t^*}_{t=0}^\infty$ by the same argument as in the proof of Proposition \ref{prop:f} if the local stable manifold theorem is applicable, which is the case if $d\neq 0,\pm n$. However, using the steady state condition \eqref{eq:xi_1b}, we obtain
\begin{equation*}
    n-d=\xi_1^*(-U_{yy}+2GU_{yz}-G^2U_{zz})=-\xi_1^*\begin{bmatrix}
        1 & -G
    \end{bmatrix}
    \begin{bmatrix}
        U_{yy} & U_{yz}\\
        U_{yz} & U_{zz}
    \end{bmatrix}
    \begin{bmatrix}
        1 \\ -G
    \end{bmatrix}>0
\end{equation*}
by the strict concavity of $U$. Therefore the case $d=n$ never occurs. If $d>0$, then since $n>d>0$, we have $\lambda_1=n/d>1$, so the path is unique.

\medskip
\noindent
\ref{item:b3} By the definition of $\xi_t$, we have $P_t=\xi_{1t}^*G^t$. Since $\xi_{1t}^*\to \xi_1^*$, the claim follows from the characterization of $\xi_1^*$. Since $P_t$ grows at rate $G$ and dividends grow at rate $G_d$, we have $\sum_{t=1}^\infty D_t/P_t<\infty$, so there is an asset price bubble by Lemma \ref{lem:bubble}. \hfill \qedsymbol

\subsection{Proof of Theorem \ref{thm:lr}}

\ref{item:lr1} Immediate from Proposition \ref{prop:f}.

\medskip
\noindent
\ref{item:lr2} The existence of bubbly equilibria follows from Proposition \ref{prop:b}. The nonexistence of fundamental equilibria when $w<w_f^*$ follows from Theorem 2 of \citet{HiranoToda2025JPE}.

\medskip
\noindent
\ref{item:lr3} The condition $w_f^*<w<w_b^*$ implies that the fundamental steady state $\xi_f^*=(0,0)$ of \eqref{eq:Phi_b} is stable because $\lambda_1,\lambda_2\in (0,1)$. Therefore there exist a continuum of equilibria such that $G^{-t}P_t\to 0$. In any such equilibrium, using \eqref{eq:Rt}, Assumption \ref{asmp:G}, and the homogeneity of $U$, the gross risk-free rate becomes
\begin{align*}
    R_t&=\frac{U_y}{U_z}(y_t,z_{t+1})=\frac{U_y}{U_z}(a_t-P_t,b_{t+1}+P_{t+1}+D_{t+1})\\
    &=\frac{U_y}{U_z}(1-(1/a)(P_t/G^t),Gw+(G/a)(P_{t+1}/G^{t+1})+(DG_d/a)(G_d/G)^t)\\
    &\to \frac{U_y}{U_z}(1,Gw)>G_d
\end{align*}
as $t\to\infty$. Therefore letting $R\coloneqq (U_y/U_z)(1,Gw)>G_d$ and $V_t$ be the fundamental value of the asset, we have
\begin{equation*}
    \lim_{t\to\infty} V_t/G_d^t=\frac{DG_d}{R-G_d}>0.
\end{equation*}
Thus, in any fundamental equilibrium, $G_d^{-t}P_t$ converges to this positive number. However, Proposition \ref{prop:f} shows that such an equilibrium is unique. Therefore all equilibria except this one are bubbly. \hfill \qedsymbol

\printbibliography

\end{document}